\def\d{{\, \rm d}}
\newtheorem{theorem}{Theorem}[section]
\newtheorem{proposition}[theorem]{Proposition}
\begin{document}

\begin{frontmatter}

\title{Rigorous Derivation of Stochastic Conceptual Models for the El Ni\~no-Southern Oscillation from a Spatially-Extended Dynamical System}

\author{Nan Chen}
\address{Department of Mathematics, University of Wisconsin-Madison, 480 Lincoln Dr., Madison, WI 53706, USA}
\author{Yinling Zhang\corref{mycorrespondingauthor}}\cortext[mycorrespondingauthor]{Corresponding author}\ead{zhang2447@wisc.edu}
\address{Department of Mathematics, University of Wisconsin-Madison, 480 Lincoln Dr., Madison, WI 53706, USA}

\begin{abstract}
El Ni\~no-Southern Oscillation (ENSO) is the most predominant interannual variability in the tropics, significantly impacting global weather and climate. In this paper, a framework of low-order conceptual models for the ENSO is systematically derived from a spatially-extended stochastic dynamical system with full mathematical rigor. The spatially-extended stochastic dynamical system has a linear, deterministic, and stable dynamical core. It also exploits a simple stochastic process with multiplicative noise to parameterize the intraseasonal wind burst activities. A principal component analysis based on the eigenvalue decomposition method is applied to provide a low-order conceptual model that succeeds in characterizing the large-scale dynamical and non-Gaussian statistical features of the eastern Pacific El Ni\~no events. Despite the low dimensionality, the conceptual modeling framework contains outputs for all the atmosphere, ocean, and sea surface temperature components with detailed spatiotemporal patterns. This contrasts with many existing conceptual models focusing only on a small set of specified state variables. The stochastic versions of many state-of-the-art low-order models, such as the recharge-discharge and the delayed oscillators, become special cases within this framework. The rigorous derivation of such low-order models provides a unique way to connect models with different spatiotemporal complexities. The framework also facilitates understanding the instantaneous and memory effects of stochastic noise in contributing to the large-scale dynamics of the ENSO.
\end{abstract}

\begin{keyword}
Stochastic conceptual model\sep Eigenvalue decomposition\sep 
Spatially-extended stochastic dynamical systems\sep Large-scale ENSO features\sep Multiplicative noise\sep Stochastic discharge-recharge oscillator\sep Stochastic delayed oscillator
\MSC[2010] 35R60\sep 37N10\sep 60H10  \sep 65F15
\end{keyword}

\end{frontmatter}


\section{Introduction}
El Ni\~no-Southern Oscillation (ENSO) is the most predominant interannual variability in the tropics with significant impacts on the global weather and climate \cite{philander1983nino, ropelewski1987global, klein1999remote, mcphaden2006enso}. El Ni\~no and La Ni\~na are opposing patterns, representing the warm and cool phases of a recurring climate pattern across the tropical Pacific. They form an irregular oscillator, shifting back and forth every three to seven years, leading to predictable shifts in sea surface temperature (SST) and disrupting the wind and rainfall patterns across the tropics.

A hierarchy of dynamical and statistical models with different spatiotemporal complexity has been developed to understand the ENSO dynamics across a wide range of spatiotemporal scales. The simplest models are the conceptual or box models \cite{schopf1988vacillations, picaut1996mechanism, jin1997equatorial, wang2004understanding, chen2022multiscale, capotondi2018nature}, which involve a few (stochastic) ordinary differential equations. Some variables in these models are directly related to the commonly used low-dimensional ENSO indicators, such as the Ni\~no 3 index. The models at the next level of complexity are called simple dynamical models \cite{leeuwen2002balanced, fang2018simulating, thual2016simple, chen2017simple}, which consist of a set of partial differential equations to characterize the large-scale dynamics of ENSO. By including more detailed dynamics and coupled relationships with atmosphere and other state variables, the resulting models are named the intermediate complex models (ICMs) \cite{zebiak1987model, battisti1989interannual, neelin1993modes,zhang2003new,chen2022simple}. The ICMs are widely used for studying the mechanisms and the forecast of the ENSO. The ENSO dynamics have also been incorporated into the general circulation models (GCMs) \cite{planton2021evaluating, lau2000impact, guilyardi2020enso, zhao2020effects} to understand its global and regional impacts.

Among these categories of ENSO models, the low-order conceptual models are of particular interest for the following reasons. First, these models are mathematically tractable due to low dimensionality and simple structures. Rigorous analysis of model properties, such as the bifurcations and the stability, can be easily carried out for these models that help understand the mechanisms of the large-scale ENSO features. Second, these models are often utilized to test various hypotheses of physical and causal dependence between different variables. The results can provide valuable guidelines for developing more sophisticated models with refined dynamical structures \cite{jin1997equatorial, jin1997equatorial2}. These conceptual models have also been used as testbeds for guiding simple dynamical models or ICMs to include additional stochastic noise or stochastic parameterizations \cite{chen2022multiscale, chen2022simple}.  Third, because of the low computational cost, these conceptual models can also be used to predict the ENSO indices \cite{latif1998review, xue1994prediction} and study the predictability from a statistical point of view \cite{fang2022quantifying} in light of ensemble forecast methods. Several low-order conceptual models have been independently developed, including the recharge-discharge oscillator \cite{jin1997equatorial, wyrtki1975nino}, the delayed oscillator \cite{suarez1988delayed, battisti1989interannual, mccreary1983model}, the western-Pacific oscillator \cite{weisberg1997western}, and the advective–reflective oscillator \cite{picaut1997advective}. Later, a unified ENSO oscillator motivated by the dynamics and thermodynamics of Zebiak and Cane's coupled ocean–atmosphere model has also been built \cite{wang2001unified}. These models were mainly proposed based on physical intuitions and highlighted one or two specific dynamical features of the ENSO as the building blocks. They have led to many successes in applications.

In this paper, a framework of low-order conceptual models for the ENSO is systematically derived from a spatially-extended stochastic dynamical system with full mathematical rigor. The spatially-extended stochastic dynamical system has a linear, deterministic, and stable dynamical core, describing the interactions between atmosphere, ocean and SST at the interannual time scale. It also exploits a simple stochastic process with state dependent (i.e., multiplicative) noise to parameterize the intraseasonal wind burst activities, including the effect from the Madden-Julian oscillation (MJO), that trigger or terminate the El Ni\~no events. The model captures many observed dynamical features of the ENSO, such as the varying amplitudes and durations of different El Ni\~nos as well as the extreme events. It also reproduces the observed power spectrum and non-Gaussian statistics of the SST in the eastern Pacific, the region of the active ENSO events, thanks to the multiplicative noise. Since the dynamical core is linear, a principal component analysis based on the eigenvalue decomposition method is applied to provide a low-order representation of the large-scale dynamics of the ENSO. By further projecting the stochastic wind bursts to the leading order basis functions, the low-order representation succeeds in reproducing the observed large-scale dynamical and statistical features of the ENSO. In light of such a strategy of reducing the model complexity, a low-order conceptual modeling framework is thus rigorously derived. The stochastic versions of many state-of-the-art low-order oscillators, such as the recharge-discharge and the delayed oscillators, become special cases within this framework.

The development of the low-order conceptual modeling framework here is very different from the unified ENSO oscillator \cite{wang2001unified} and many other models. First, the unified oscillator exploits Zebiak and Cane's model, which utilizes nonlinearity to trigger the ENSO cycles, as a starting reference model to provide guidelines for developing the conceptual model. In contrast, the conceptual model developed here starts from a linear spatially-extended dynamical system, which uses multiplicative noise to induce the ENSO events. Second, instead of providing only a heuristic building block, the linear dynamical core allows rigorous mathematical derivations from the spatially-extended dynamical system to the low-order conceptual model. This procedure facilitates rigorous analysis and direct intercomparison between the two systems. Third, the low-order conceptual modeling framework here provides outputs for all the atmosphere, ocean, and SST components with detailed spatiotemporal patterns. This contrasts with many existing conceptual models, which focus only on a small set of specified state variables and often involve the averaged quantities over a large domain as a coarse-graining representation. Despite the intrinsic low dimensionality of the model, the state variables for atmosphere, ocean, and SST in this conceptual modeling framework are connected via the eigenvectors. Therefore, depending on the quantity of interest, the model can explicitly characterize the relationships between different atmosphere-ocean-SST components contributing to the ENSO dynamics.

The rest of the paper is organized as follows. Section \ref{Sec:Observations} describes the observational data. Section \ref{Sec:PDEModel} presents the simple spatially-extended stochastic dynamical model and its properties. Section \ref{Sec:ROP} utilizes the eigenvalue decomposition to illustrate the reduced order representation of the ENSO dynamics. Section \ref{Sec:CM} shows the mathematical expressions of the conceptual models, including the recharge-discharge and the delayed oscillators. The paper is concluded in Section \ref{Sec:Conclusion}.

\section{Observational Data}\label{Sec:Observations}
The following observational data are utilized in this study. Daily sea surface temperature (SST) data comes from the OISST reanalysis \cite{reynolds2007daily} (\url{https://www.ncdc.noaa.gov/oisst}). Monthly thermocline depth is from the NCEP GODAS reanalysis \cite{behringer2004evaluation} (\url{http://www.esrl.noaa.gov/psd/}). Note that thermocline depth is computed from potential temperature as the depth of the $20^o$C isotherm. Daily zonal winds at 850 hPa are from the NCEP–NCAR reanalysis \cite{kalnay1996ncep} (\url{http://www.esrl.noaa.gov/psd/}). All datasets are averaged meridionally within $5^o$S-$5^o$N in the tropical Pacific ($120^o$E–$80^o$W). Only the anomalies are presented in this study, calculated by removing the monthly mean climatology of the whole period. The eastern and western Pacific regions are defined as $120^o$E-$160^o$W and $160^o$W-$80^o$W, respectively. The Ni\~no 3 SST index is the average of SST anomalies over the region of $150^o$W-$90^o$W. Denote by $T_E$ the time series of SST averaged over the eastern Pacific. The time series $T_E$ and Ni\~no 3 SST index are almost the same as each other.

\section{The Simple Spatially-Extended Stochastic Model}\label{Sec:PDEModel}
\subsection{Review of the model}
The simple spatially-extended stochastic model was originally developed in \cite{thual2016simple}. The starting system is a coupled atmosphere-ocean-SST model, which is given by

\noindent Atmosphere:
\begin{equation}\label{Atmosphere_model_Starting}
\begin{split}
&-yv-\partial_{x}\theta=0\\
&yu-\partial_{y}\theta=0\\
&-(\partial_{x}u+\partial_{y}v)=E_{q}/(1-\overline{Q})
\end{split}
\end{equation}
\noindent Ocean:
\begin{equation}\label{Ocean_model_Starting}
\begin{split}
&\partial_{t}U-c_{1}YV+c_{1}\partial_{x}H=c_{1}{ \tau}\\
&YU+\partial_{Y}H=0\\
&\partial_{t}H+c_{1}(\partial_{x}U+\partial_{Y}V)=0
\end{split}
\end{equation}
\noindent SST:
\begin{equation}\label{SST_model_Starting}
\begin{split}
&\partial_{t}T =-c_1\zeta E_{q}+c_1\eta H.
\end{split}
\end{equation}
The above coupled system consists of a non-dissipative Matsuno–Gill type atmosphere model \cite{matsuno1966quasi, gill1980some}, a simple shallow-water ocean model \cite{vallis2016geophysical} and an SST budget equation \cite{jin1997equatorial2}. The non-dissipative atmosphere is also consistent with the skeleton model for the MJO in the tropics \cite{majda2009skeleton}. In \eqref{Atmosphere_model_Starting}--\eqref{SST_model_Starting}, $u$ and $v$ are the zonal and meridional wind speeds, $\theta$ is the potential temperature, $U$ and $V$ are the zonal and meridional ocean currents, $H$ is the thermocline depth, $T$ is the SST, $E_q=\alpha_q T$ is the latent heat, and $\tau =\gamma u$ is the wind stress along the zonal direction. All the variables are the anomalies from the climatology mean states. There parameters $c_1$, $\overline{Q}$ and $\zeta$ are all constants.
The parameter $c_1$ is a non-dimensional number, representing the ratio of ocean and atmosphere phase speeds over the Froude number, $\overline{Q}$ is related to the background moisture, and $\zeta$ is the latent heating exchange coefficient. The thermocline feedback $\eta(x)$ is a fixed spatial dependent function stronger in the eastern Pacific due to the shallower thermocline. See Panel (c) of Figure \ref{Truth_Simulation}. Also, see the Appendix for a summary of model parameters. Note that different axes in the meridional directions are utilized for the atmosphere ($y$) and ocean ($Y$). This is because the deformation radii of the atmosphere and the ocean are different. The atmosphere covers the entire equatorial band from $[0, L_A]$, and therefore the periodic boundary condition in the zonal direction is applied, namely $u(0,y,t)=u(L_A,y,t)$. The ocean covers the equatorial Pacific $[0, L_O]$ with the boundary conditions $\int_{-\infty}^{\infty}U(0,Y,t)\d Y=0$  and $U(L_O,Y,t)=0$ \cite{cane1981response, jin1997equatorial2}.

Since the most predominant behavior of the ENSO is around the equator and along the zonal direction, it is natural to apply a meridional expansion, which serves as the separation of variables, and then implement a meridional truncation to the first basis in both atmosphere $\phi_0(Y)$ and ocean $\psi_0(y)$ to reduce the complexity of the system. The basis functions in the meridional direction are given by the parabolic cylinder functions \cite{majda2003introduction}. See Panels (a)--(b) in Figure  \ref{Truth_Simulation} for the meridional basis functions, where the first one has a Gaussian profile. The meridional truncation triggers atmosphere Kelvin, Rossby waves $K_A, R_A$, and ocean Kelvin, Rossby waves $K_O, R_O$. Therefore, after removing the meridional dependence from \eqref{Atmosphere_model_Starting}--\eqref{SST_model_Starting}, the coupled system (with dependence only on time $t$ and zonal coordinate $x$) yields.

\noindent Atmosphere:
\begin{equation}\label{Atmosphere_model}
\begin{split}
&\partial_x K_A  = -\chi_A E_q(2-2\bar{Q})^{-1}\\
  &-\partial_x R_A/3  = -\chi_A E_q(3-3\bar{Q})^{-1}\\
  (B.C.)~~~&K_A(0,t) = K_A(L_A,t)\\
(B.C.)~~~ &R_A(0,t) = R_A(L_A,t)
\end{split}
\end{equation}
\noindent Ocean:
\begin{equation}\label{Ocean_model}
\begin{split}
&\partial_t K_O + c_1\partial_x K_O  = \chi_O c_1\tau/2\\
 &\partial_t R_O - (c_1/3)\partial_x R_O  = -\chi_O c_1\tau/3\\
 (B.C.)~~~&K_O(0,t) = r_WR_O(0,t)\\
(B.C.)~~~&R_O(L_O,t) = r_EK_O(L_O,t)
\end{split}
\end{equation}
\noindent SST:
\begin{equation}\label{SST_model}
\begin{split}
 &\partial_t T = - c_1\zeta E_q+ c_1\eta (K_O + R_O).
\end{split}
\end{equation}
The boundary conditions (B.C.) in \eqref{Atmosphere_model}--\eqref{SST_model} are derived from those related to \eqref{Atmosphere_model_Starting}--\eqref{SST_model_Starting}. Periodic boundary conditions are remained for atmosphere waves, while those for the ocean velocity become the reflection boundary conditions for ocean waves. The constants $\chi_A$ and $\chi_O$ are the meridional projection coefficients with $\chi_A = \int_{-\infty}^{\infty}\phi_0(y)\phi_0(y/\sqrt{c})dy$ and $\chi_O = \int_{-\infty}^{\infty}\psi_0(Y)\psi_0(\sqrt{c}Y)dY$, where $c$ is the ratio of ocean and atmosphere phase speed. Once these waves are solved, the physical variables can be reconstructed,
\begin{equation}\label{physical_reconstruction}
\begin{split}
&u = (K_A - R_A)\phi_0 + (R_A/\sqrt{2})\phi_2\\
&\theta = -(K_A + R_A)\phi_0 - (R_A/\sqrt{2})\phi_2\\
&U = (K_O - R_O)\psi_0 + (R_O/\sqrt{2})\psi_2\\
&H = (K_O + R_O)\psi_0 + (R_O/\sqrt{2})\psi_2
\end{split}
\end{equation}
where $\phi_2$ and $\psi_2$ are the third meridional bases of atmosphere and ocean, respectively, resulting from the meridional expansion of the parabolic cylinder function \cite{majda2003introduction}. See Panels (a)--(b) in Figure \ref{Truth_Simulation} for the structures of these bases.

In addition to the interannual state variables, the intraseasonal variabilities are also part of the indispensable ENSO dynamics. Although these intraseasonal variabilities behave like random noise in the interannual time scale, they play vital roles in generating the ENSO events and increasing the ENSO complexity. In particular, atmosphere wind bursts in the intraseasonal time scale, such as the westerly wind bursts (WWBs) \cite{harrison1997westerly, vecchi2000tropical, tziperman2007quantifying}, easterly wind bursts (EWBs) \cite{hu2016exceptionally, levine2016july} and the MJO \cite{hendon2007seasonal, puy2016modulation}, are essential in triggering and terminating the El Ni\~no events. Due to the fast dynamics of wind activities, it is natural to adopt a simple stochastic process to effectively characterize their time evolutions. With the stochastic wind bursts, the zonal wind stress $\tau$ now has two components, $\tau =\gamma(u+u_p)$, where $u$ is directly from the atmosphere model \eqref{Atmosphere_model} while $u_p$ is the contribution from the stochastic wind bursts. Here,
\begin{equation}\label{up_equation}
  u_{p}=a_{p}(t)s_{p}(x)\phi_0(y),
\end{equation}
where for simplicity $s_p(x)$ is a fixed spatial basis function being localized in the western Pacific (see Panel (c) of Figure \ref{Truth_Simulation}), since most of the active wind bursts are observed there. The stochastic amplitude $a_{p}$ is given by the following stochastic differential equation
\begin{equation}\label{ap_equation}
  \frac{\d a_{p}}{\d t}=-d_{p}a_{p}+{\sigma_{p}(T_{W})}\dot{W}(t),
\end{equation}
where $\dot{W}(t)$ is a standard white noise \cite{gardiner1985handbook} and $d_{p}$ is a damping term characterizing the temporal correlation of the wind bursts at the intraseasonal time scale. 

Note that the noise coefficient ${\sigma_{p}(T_{W})}$ is state-dependent (namely, the multiplicative noise) as a function of the averaged SST in the western Pacific.
This is because a warmer SST induces more convective activities, which in turn triggers more wind bursts \cite{vecchi2000tropical, harrison1997westerly}. Here, the state dependence of ${\sigma_{p}(T_{W})}$ is given by a two-state Markov jump process with one active state and one quiescent state:
\begin{equation}
\sigma_p\left(T_W\right)=\left\{\begin{array}{l}
\sigma_{p 0} = 0.2\quad \text { for quiescent state } 0  \\
\sigma_{p 1} = 2.6\quad \text { for active state } 1
\end{array}\right.    
\end{equation}
The transition rate from the quiescent to the active state $\mu_{01}$ and that from the active to the quiescent state $\mu_{10}$ are shown in Panel (d) of Figure \ref{Truth_Simulation}. The explicit equations of them are in the following:
\begin{equation}
    \begin{split}
        & \text{State 1 to 0:}\ \mu_{10}=\frac{1}{4}\left(1-\tanh \left(2 T_W\right)\right)\\
        & \text{State 0 to 1:}\ \mu_{01}=\frac{1}{8}\left(\tanh \left(2 T_W\right)+1\right)
    \end{split}
\end{equation}
Here, transition rates $\mu_{01}$ and $\mu_{10}$ are correlated and anti-correlated with $T_W$, respectively, which reflect the nature of the multiplicative noise. Note that the wind burst activity from the model does not favor specifically westerly or easterly. The WWBs and EWBs generated from the model are entirely random. Only the amplitude of the wind bursts is determined by the transitions related to the SST. The two-state Markov jump process for parameterizing the wind activities is utilized here for simplicity, categorizing the winds by only the active and quiescent phases. A continuous dependence of the wind strength on the SST can easily be incorporated, which leads to a similar SST behavior in the ENSO dynamics \cite{levine2010noise, chen2021bayesian}.

As a final remark, the coupled model \eqref{Atmosphere_model}--\eqref{ap_equation} focuses on the eastern Pacific El Ni\~no events, which is the primary interest of this study. The model does not include the mechanism of generating the central Pacific events \cite{ashok2007nino, lee2010increasing, capotondi2015understanding}. Incorporating the central Pacific events into the model will be briefly discussed in Section \ref{Sec:Conclusion}.

\subsection{Model properties}
The dynamical core of the system \eqref{Atmosphere_model}--\eqref{SST_model} is deterministic, linear and stable. In other words, the solution is a linear damped regular oscillator. The random wind bursts serve as external forcing to trigger the El Ni\~no events and introduce irregularity in the solution. The subsequent La Ni\~na is a consequence of the decaying phase of the ENSO cycle.

Panel (e) in Figure \ref{Truth_Simulation} shows one simulation of the coupled model \eqref{Atmosphere_model}--\eqref{ap_equation} for different fields. The model can reproduce the quasi-regular oscillation behavior of the SST anomaly. When the SST in the western Pacific increases, the wind activities become more significant. If the strong wind activity is dominated by the WWBs, then El Ni\~no events are likely to be triggered. During this process, the center of the atmosphere wind convergence moves towards the east, and the strong eastward ocean zonal current is observed in the western Pacific. The thermocline depth becomes deeper in the eastern Pacific and induces the El Ni\~no events. The model can generate not only the moderate El Ni\~no events, but also the extreme events, known as the super El Ni\~nos, that have strong amplitudes (e.g., the event at $t=490$). In addition to reproducing the super El Ni\~no that mimics the observed 1997-1998 event (e.g., at $t=490$), the model succeeds in generating the so-called delayed super El Ni\~no \cite{thual2019statistical}, similar to the observed 2014-2016 event \cite{chen2017formation, levine2016july}. One such example is the event around $t=474$-$476$, which is due to the peculiar wind structures. A series of WWBs induce the onset of an El Ni\~no event. However, a strong EWB occurs immediately to present the development of such an event to a super El Ni\~no. After a year, another period of strong WWBs happen to trigger the extreme event.

Averaging the SST over the eastern Pacific, the resulting time series is denoted by $T_E$. The model can reproduce very similar non-Gaussian probability density function (PDF) of $T_E$ as the observations \cite{thual2016simple}. This non-Gaussian feature is reproduced with the contribution from the multiplicative noise, without which the model is linear and the associated PDF is Gaussian. Likewise, the spectrum of the model has the most significant power within the band of 3 to 7 years, which is also consistent with observations.

\begin{figure}[ht]
\hspace*{-0cm}\includegraphics[width=1.0\textwidth]{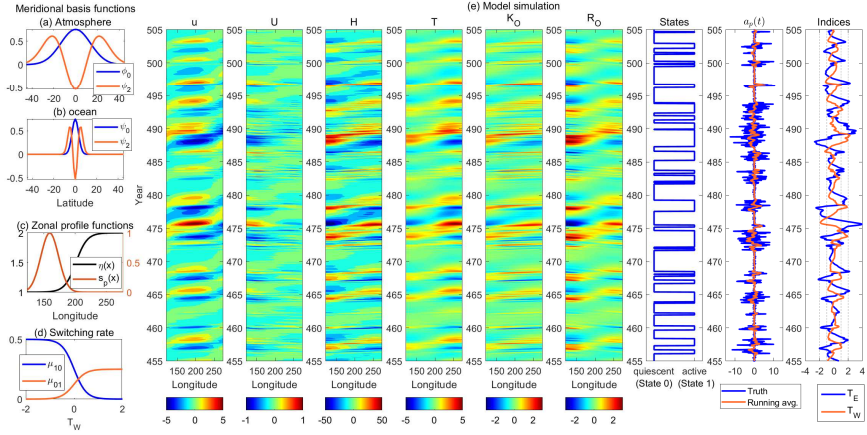}
\caption{Structure and simulation of the spatially-extended stochastic system. Panels (a)--(b): meridional basis functions of the atmosphere ($\phi_0,\phi_2$) and ocean ($\psi_0,\psi_2$). Panel (c): Zonal profile of the thermocline depth $\eta(x)$ and wind burst region $s_p(x)$. Panel (d): the transition rates in the two-state Markov jump process. Panel (e): a random realization of the model simulation. The variables are the atmosphere zonal wind velocity $u$ (unit: m/s), the ocean zonal current $U$ (unit: m/s), the thermocline depth $H$ (unit: m), SST $T$ (unit: $^o$C), ocean Kelvin wave $K_O$ (no unit), ocean Rossby wave $R_O$ (no unit), the state of the Markov jump process, the wind burst time series $a_p(t)$ (unit: m/s), and the two SST indices $T_W$ and $T_E$ (unit: $^o$C), which are the averaged value over the western and eastern Pacific, respectively. In the panel of the wind burst amplitude $a_p(t)$, a 90-day running average of $a_p(t)$ (brown curve) is included to represent the lower frequency part of the wind bursts. }\label{Truth_Simulation}
\end{figure}

\section{Reduced-Order Representation of the ENSO dynamics}\label{Sec:ROP}
\subsection{Discretization of the deterministic, linear and stable dynamics}
Without the stochastic wind bursts, the coupled atmosphere-ocean-SST model \eqref{Atmosphere_model}--\eqref{SST_model} is a linear, deterministic and stable model. Since $u$ (and therefore the wind stress $\tau$) is a function of $K_A$ and $R_A$, the coupled system can be rewritten as
\begin{equation}\label{Combined_system}
\begin{aligned}
    \partial_{t}{ K_O} + c_1\partial_x{ K_O} &= \frac{a}{2}\ ({ K_A} - { R_A}), &&K_O(0,t) = r_WR_O(0,t),\\
    \partial_{t}{ R_O} - \frac{c_1}{3}\partial_x{ R_O} &= -\frac{a}{3}\ ({ K_A} - { R_A}), &&R_O(L_O,t) = r_EK_O(L_O,t),\\
    \partial_{t}{ T}\qquad\qquad\quad\  &=-b{ T} + c_1\eta ({ K_O} + { R_O}),\\
    d_A{ K_A}+\partial_x{ K_A} & = \frac{m_1}{\alpha_q} (\mathbf{1}_{[0,L_O]}\alpha_q { T}-\langle { E_q}\rangle), &&K_A(0,t) = K_A(L_A,t),\\
    d_A{ R_A}-\frac{1}{3}\partial_x{ R_A} & = \frac{m_2}{\alpha_q} (\mathbf{1}_{[0,L_O]}\alpha_q { T}-\langle { E_q}\rangle), && R_A(0,t) = R_A(L_A,t),
\end{aligned}
\end{equation}
where a small damping $d_A=10^{-8}$ is added to guarantee the solution is unique \cite{thual2016simple}, and the new constants introduced in \eqref{Combined_system} are the following ones:
\begin{equation*}
    a = \chi_O c_1 \gamma, \quad b = c_1\zeta\alpha_q, \quad m_1 = -\chi_A \alpha_q/(2 - 2\bar{Q}),\quad m_2 = -\chi_A \alpha_q/(3 - 3\bar{Q}).
\end{equation*}
The domain of $K_O, R_O$ and $T$ is the equatorial Pacific ocean $[0, L_O]$ while the domain of $K_A$ and $R_A$ is the entire equatorial band $[0, L_A]$.

Define a vector $\mathbf{u}$ that contains all the prognostic variables at discrete equal-partitioned grid points,
\begin{equation}\label{define_u}
    \mathbf{u} = (\mathbf{K_O}; \mathbf{R_O}; \mathbf{T}) = (K_{O,1}, \ldots K_{O,{N_O}}, R_{O,1},\ldots R_{O,{N_O}}, T_1,\ldots T_{N_O})^\mathtt{T},
\end{equation}
where $N_O$ is the total number of grid points over the equatorial Pacific region $[0, L_O]$ and $\cdot^\mathtt{T}$ is the vector transpose.
Using an upwind scheme, the discrete form of the linear system \eqref{Combined_system} is given by:
\begin{equation}\label{LinearSystem}
    \frac{\d\mathbf{u}}{\d t} = \mathbf{Mu}.
\end{equation}
The detailed form of $\mathbf{M}$ is included in the Appendix.
With the stochastic wind burst, the above equation can be formally written as:
\begin{equation}\label{LinearSystem_Stochastic}
    \frac{\d\mathbf{u}}{\d t} = \mathbf{Mu} + \mathbf{F},
\end{equation}
where $\mathbf{F}$ is associated with the stochastic forcing. Denote by $\mathbf{s}_\mathbf{u} = (\chi_O c_1\gamma s_p/2,\\
-\chi_O c_1\gamma s_p/3,\mathbf{0})^\mathtt{T}$ and then $\mathbf{F}=\mathbf{s}_\mathbf{u}a_p$, where $s_p$ is written in the discrete form at grid points $1,\ldots, N_O$ and $\mathbf{0}$ is a $1\times N_O$ vector with all zero entries. This is the wind burst forcing acting on the three components $\mathbf{u}=(\mathbf{K}_O,\mathbf{R}_O, \mathbf{T})$ (ocean Kelvin wave, ocean Rossby wave, and SST) of the state variables. See Panel (a) of Figure \ref{z_and_su} for the profile of $\mathbf{s}_\mathbf{u}$. As expected, a positive response is on the ocean Kelvin waves and a negative one is on the ocean Rossby waves, both in the western Pacific. The forcing is not directly imposed on SST; therefore, the third segment of $\mathbf{s}_\mathbf{u}$, corresponding to SST components, is zero.

\begin{figure}[ht]
\hspace*{-0cm}\includegraphics[width=1.0\textwidth]{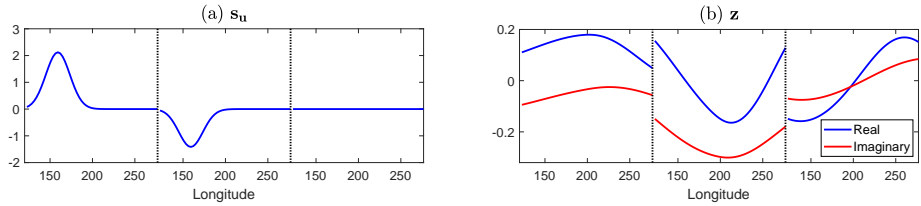}
\caption{Zonal structure of the functions $\mathbf{s}_\mathbf{u}$ and $\mathbf{z}$. The three segments correspond to the three components $\mathbf{u}=(\mathbf{K}_O,\mathbf{R}_O, \mathbf{T})$ (ocean Kelvin wave, ocean Rossby wave, and SST). }\label{z_and_su}
\end{figure}

\subsection{The eigenvalues of the system}
It can be numerically validated that the $3N_O$ eigenvalues all have distinct values and are non-zero. In addition, the $i$-th left eigenvector and the $j$-th right eigenvector of the matrix $\mathbf{M}$ are orthogonal with each other for $i\neq j$. Therefore, in light of the matrix similarity, the following result is arrived at:
\begin{equation}\label{MLambda0}
  \mathbf{X}^{-1}\mathbf{M}\mathbf{X} =\mathbf{\Lambda},
\end{equation}
or equivalently
\begin{equation}\label{MLambda}
  \mathbf{M} = \mathbf{X}\mathbf{\Lambda}\mathbf{X}^{-1},
\end{equation}
where $\mathbf{\Lambda}$ is a diagonal matrix with diagonal entries being eigenvalues and $\mathbf{X}$ contains the $3N_O$ right eigenvectors. Since the right eigenvectors $\mathbf{X}$ are complete, the matrix inverse exists.

Panel (a) of Figure \ref{LinearSolution} shows the eigenvalues associated with the $3N_O\times3N_O$ matrix $\mathbf{M}$, where $N_O=56$ is utilized here and in the subsequent numerical simulations. These eigenvalues are complex numbers. The real and imaginary parts are shown in the y- and x-axis, representing the growth rate and the frequency, respectively. Since the starting system is stable, the growth rates of all the eigenvalues are negative, representing the decay of the solution. These eigenvalues are ranked according to the growth rate. Therefore, the leading two eigenmodes (marked by red dots) are the dominant modes of the system since they have the slowest decaying rates. These two modes appear in pairs; both the eigenvalues and the eigenvectors are complex conjugates. Their eigenvalues have a frequency of $4.6$ years and a decay rate of $1.5$ years, which are consistent with the observed features of ENSO. The reconstructed spatiotemporal patterns associated with these two eigenmodes (by setting the growth rate to be zero for the plotting purpose), as shown in Panel (b) of \eqref{LinearSolution}, mimic the large-scale ENSO oscillation solution. Notably, the leading two eigenmodes are very robust with respect to the spatial resolution $N_O$ as long as $N_O$ is not too small. Therefore, the analysis here provides a theoretical justification for developing a low-order model based on these leading two eigenmodes.

\begin{figure}[ht]
\hspace*{-0cm}\includegraphics[width=1.0\textwidth]{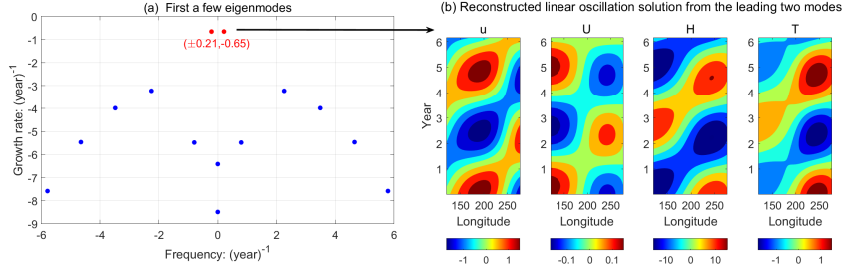}
\caption{Eigenvalue analysis of the spatially-extended stochastic system. Panel (a): the first a few eigenvalues. The eigenvalues are complex numbers. The real and imaginary parts are shown in the y- and x-axis, representing the growth rate and the frequency, respectively. These eigenvalues are ranked according to the growth rate. The first two eigenvalues, which are a pair of complex conjugates, are marked in red color. Panel (b): The reconstructed linear oscillation solution from the leading two eigenmodes, which is given by taking the summation of the two products of each of the eigenvalues and the associated eigenvectors. The growth rate is set to be zero here to display the oscillation patterns. }\label{LinearSolution}
\end{figure}

\subsection{Comparison of the projected solution with the full solution}
Before developing a reduced-order conceptual model based on the eigenvalue analysis, as shown above, it is essential to study the similarity and the gap between the full solution from the spatially-extended stochastic system \eqref{Atmosphere_model}--\eqref{ap_equation} and that from the low-dimensional representation consisting of only the leading two eigenmodes, namely the projected solution.

Figure \ref{Full_Projection} includes a comparison of the SST between these two solutions. The projected solution (Panel (b)) is qualitatively similar to the full solution (Panel (a)) with an almost identical pattern and a nearly negligible residual error (Panel (c)). This is not surprising since the decaying rates of the remaining eigenmodes are more significant than those of the leading two components. Thus, the time evolution of the signal is overall explained by the two leading modes. In addition to the spatiotemporal pattern, the two key statistics, namely the power spectrum and the PDF, of the averaged SST over the eastern Pacific $T_E$ are almost indistinguishable between the full and the projected solutions. In particular, the power spectrum peaks between 3 and 7 years, which is consistent with the observed frequency of the ENSO cycles. The projected solution also recovers the non-Gaussian PDF of $T_E$, which has almost the same skewness and kurtosis as the full solution. Non-Gaussianity is a crucial feature of the ENSO statistics, which indicates a non-symmetry between the El Ni\~no and the La Ni\~na. The fat tail on the positive side corresponds to the super El Ni\~nos, which remains in the projected solution.

\begin{figure}[ht]
\hspace*{-0cm}\includegraphics[width=1.0\textwidth]{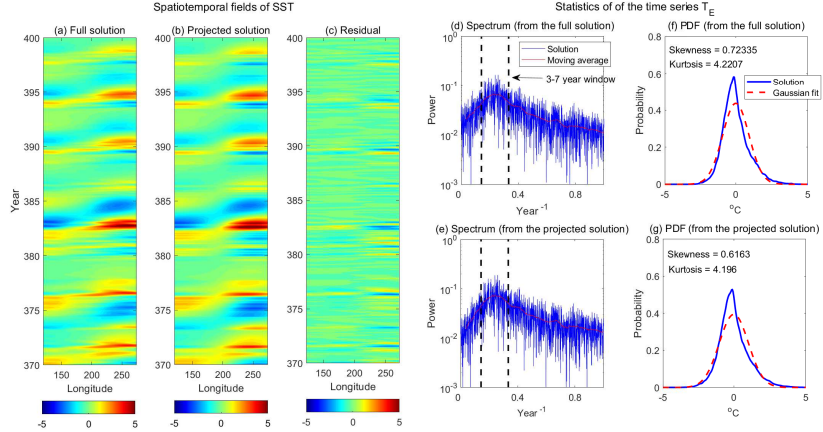}
\caption{Comparison of the full solution from the spatially-extended stochastic system \eqref{Atmosphere_model}--\eqref{ap_equation} with that from the low-order representation consisting of only the leading two eigenmodes, namely the projected solution. Panels (a)--(c): One realization of the full solution and the corresponding projected solution as well as their difference, namely the residual. Panels (d)--(g): the statistics of the time series $T_E$, which is the averaged SST over the eastern Pacific. Panels (d)--(e) compare the spectrum, where the two regions bounded by the two dashed black curves represent the 3- to the 7-year window. A running average showing the smoothed version of the spectrum is also included. Panels (f)--(g) compare the PDFs, where the red dashed curve shows the Gaussian fit. The values of the skewness and the kurtosis of $T_E$ are also included to indicate the strong non-Gaussian features. }\label{Full_Projection}
\end{figure}

Figure \ref{Case_DelayedEvent} includes a more detailed comparison between the full and the projected solutions by showing different physical and wave variables.  A delayed super El Ni\~no event (from $t=474$ to $t=476$) occurs during this period. Overall, the projected solution captures the spatiotemporal patterns of the full one. Especially, the delayed super El Ni\~no is reproduced in the projected solution. Yet, the small-scale wave propagation is missed in the projected solution. This is particularly unambiguous in the panels of the ocean Kelvin wave $K_O$ and the ocean Rossby wave $R_O$. As a result, the values across the western Pacific or the eastern Pacific remain constant for different fields. This is perhaps the most significant difference between the full and the projected solutions. Small-scale wave propagations are expected to appear at least with three degrees of freedom (e.g., either three variables or three regions across Pacific) \cite{chen2022multiscale}. Nevertheless, the projected solution clearly captures the large-scale eastward moving patterns, representing the ENSO development.

\begin{figure}[ht]
\hspace*{-0cm}\includegraphics[width=1.0\textwidth]{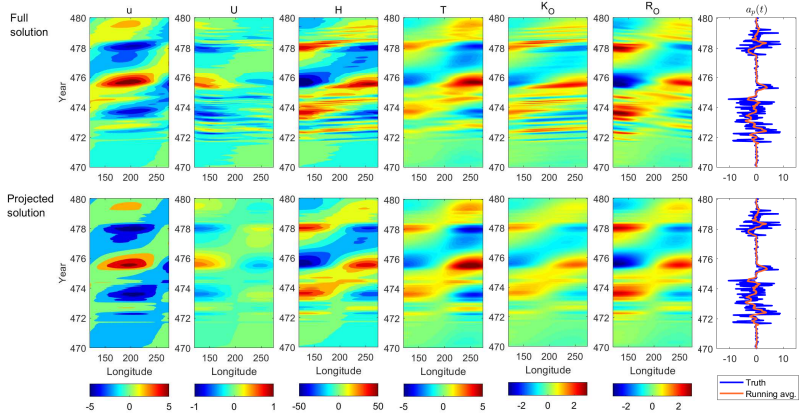}
\caption{A case study of the delayed super El Ni\~no event (from $t=474$ to $t=476$) for comparing the full solution from the spatially-extended stochastic system with that from the low-order representation consisting of only the leading two eigenmodes, namely the projected solution. The variables are the atmosphere zonal wind velocity $u$ (unit: m/s), the ocean zonal current $U$ (unit: m/s), the thermocline depth $H$ (unit: m), SST $T$ (unit: $^o$C), ocean Kelvin wave $K_O$ (no unit), ocean Rossby wave $R_O$ (no unit), the state of the Markov jump process, and the wind burst time series $a_p(t)$ (unit: m/s). In the panel of the wind burst amplitude $a_p(t)$, a 90-day running average of $a_p(t)$ (brown curve) is included to represent the lower frequency part of the wind bursts.}\label{Case_DelayedEvent}
\end{figure}

Figure \ref{Hovmoller} compares the full observed SST using the reanalysis data (Panel (a)) with the projected data using the leading two eigenmodes computed from the model. Panel (b) shows the results by directly projecting the SST to these two modes. The projected solution resembles the full observations in the eastern Pacific with a pattern correlation above $0.85$ at all longitude grid points. In contrast, the projected solution in the central-western Pacific differs from the full solution. This is not surprising since the starting model and the low-order presentation here focus only on the eastern Pacific El Ni\~no events and do not have the mechanism to create the central Pacific events. The projected solution in the western Pacific captures the overall patterns of the full observations, though the details are somewhat missed. In addition to using the SST data, the atmosphere wind data can also be used to reconstruct the low-order representation of the SST. See Panel (c). Here, the atmosphere wind data is first projected to the leading two eigenvectors corresponding to the atmosphere components to compute the projection coefficients. Then these coefficients are multiplied by the eigenvectors corresponding to the SST components for reconstruction. The reconstructed low-order representation of SST using the atmosphere reanalysis data has a similar pattern as that based on the direct projection using the SST in the eastern Pacific. Finally, Panel (d) shows the reconstructed SST data using a direct data-driven regression method. This is achieved by first computing the correlation coefficient $r(x)$ between the SST at each longitude with the Ni\~no 3 SST index using the observational data and then carrying out the spatiotemporal reconstruction using the formula $\mbox{SST}(x,t) = r(x)\mbox{Ni\~no 3}(t)$. Note that this procedure uses the entire observational data and can be regarded as one of the optimal linear reconstruction methods. It serves as a benchmark solution. The results show that, similar to the low-dimensional projected solutions in Panels (b)--(c), the reconstruction in the eastern Pacific resembles the truth while that in the central-western Pacific contains a significant error. The comparison between the results in Panels (b)--(c) and Panel (d) implies that the reconstruction using the model-induced basis functions is skillful in reproducing the eastern Pacific El Ni\~no events. It also indicates that both the wind and the SST data can be used for the reconstruction, allowing large freedom in choosing the available data sets.

\begin{figure}[ht]
\hspace*{-0cm}\includegraphics[width=1.0\textwidth]{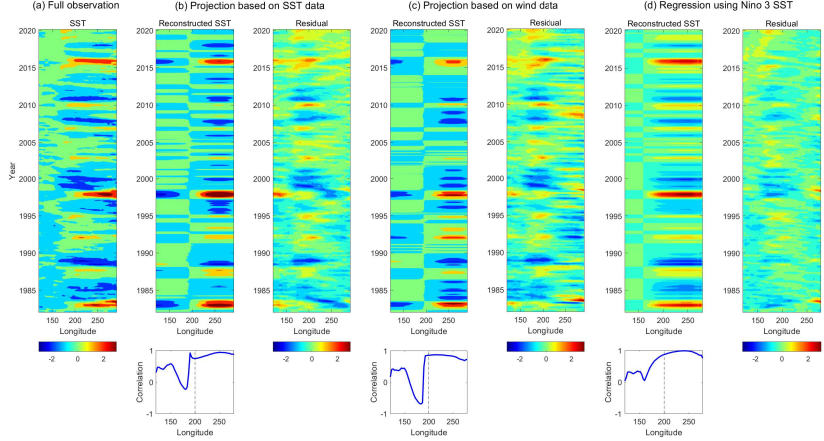}
\caption{Comparison of the observational SST data (Panel (a)) with low-order representations of the spatiotemporal solution using different reconstructed methods. Panel (b): the reconstruction by projecting the observed SST to the leading two eigenmodes from the spatially-extended system. Panel (c): the reconstruction by projecting the observed wind data to the leading two eigenmodes from the spatially-extended system and then recovering the SST patterns via the interconnection between the wind and the SST components in the eigenvectors. Panel (d): the reconstruction by using a linear regression to determine the SST. The regression coefficient is determined by computing the correlation coefficient between the SST at each longitude with the Ni\~no 3 SST index using the observational data. In Panels (b)--(d), the residual in the reconstruction, namely the difference between the truth shown in Panel (a) and the reconstructed spatiotemporal SST field, is also shown. In addition, the correlation between the reconstructed SST field with the truth at each longitude is included beneath the reconstructed field.  }\label{Hovmoller}
\end{figure}

\section{Reduced-Order Conceptual Models}\label{Sec:CM}
Given the fact that the reconstructed SST fields from the projected solution and the full solution are similar to each other, it is natural to develop a reduced-order conceptual model using these two eigenmodes to characterize ENSO complexity with low computational cost.

\subsection{The general framework}
Plugging the relationship of the matrix decomposition $\mathbf{M} = \mathbf{X}\mathbf{\Lambda}\mathbf{X}^{-1}$ in \eqref{MLambda} into the full system \eqref{LinearSystem_Stochastic} yields
\begin{equation}\label{LinearSystem_Stochastic3}
    \frac{\d\mathbf{u}}{\d t} = \mathbf{X}\mathbf{\Lambda}\mathbf{X}^{-1}\mathbf{u} + \mathbf{F},
\end{equation}
where $\mathbf{\Lambda}$ is a diagonal matrix.
Then multiplying \eqref{LinearSystem_Stochastic3} by $\mathbf{X}^{-1}$ gives $3N_O$ independent equations
\begin{equation}\label{LinearSystem_Stochastic4_temp2}
    \frac{\d\mathbf{v}}{\d t} =  \mathbf{\Lambda}\mathbf{v} + \mathbf{G},
\end{equation}
where 
\begin{equation}\label{LinearSystem_Stochastic4_temp1}
  \mathbf{v} = \mathbf{X}^{-1}\mathbf{u},\qquad\mbox{and}\qquad \mathbf{G} = \mathbf{X}^{-1}\mathbf{F}.
\end{equation}
Once $\mathbf{v}$ is solved, $\mathbf{u}$ can be obtained by
\begin{equation}\label{LinearSystem_Stochastic4_temp3}
  \mathbf{u} = \mathbf{X}\mathbf{v}.
\end{equation}

Denote by $\mathbf{z}^\mathtt{T}$ and $\bar{\mathbf{z}}^\mathtt{T}$ the first two rows of $\mathbf{X}^{-1}$. See Panel (b) of Figure \ref{z_and_su} for the profile of $\mathbf{z}$. Likewise, denote by $\mathbf{x}$ and $\bar{\mathbf{x}}$ the first two columns of $\mathbf{X}$. The corresponding leading two eigenvalues are denoted by $\lambda_o$ and $\bar\lambda_o$ (the two red dots in Panel (a) of Figure \ref{LinearSolution}), which are written as $-d_o \pm i\omega_o$ with $-d_o<0$ and $\omega_o$ being the growth rate and the frequency, respectively. Here $\cdot^\mathtt{T}$ is the transpose, and $\bar{\cdot}$ is the complex conjugate. Clearly, $\mathbf{z}^\mathtt{T}\mathbf{x} =\bar{\mathbf{z}}^\mathtt{T}\bar{\mathbf{x}} = 1$.
\begin{proposition}
Define $v = \mathbf{z}^\mathtt{T}\mathbf{u}$ and $g = \mathbf{z}^\mathtt{T}\mathbf{F}=\mathbf{z}^\mathtt{T}\mathbf{s}_\mathbf{u}a_p$
and naturally, $\bar{v} = \bar{\mathbf{z}}^\mathtt{T}\mathbf{u}$ and $\bar{g} = \bar{\mathbf{z}}^\mathtt{T}\mathbf{F}=\bar{\mathbf{z}}^\mathtt{T}\mathbf{s}_\mathbf{u}a_p$.
In light of the first two dimensions of \eqref{LinearSystem_Stochastic4_temp2}, the following two-dimensional forced oscillator system is reached,
\begin{equation}\label{LinearSystem_Stochastic6}
\begin{split}
    \frac{\d v}{\d t} &=  \lambda_ov + g,\\
    \frac{\d\bar{v}}{\d t} &=  \bar{\lambda}_o\bar{v} + \bar{g}.
\end{split}
\end{equation}
Once $v$ and $\bar{v}$ are computed, the approximate of $\mathbf{u}$ based on these two modes (namely the low-order representation of the spatiotemporal reconstructed solution), which is denoted by $\widetilde{\mathbf{u}}$, can be computed via
\begin{equation}\label{state_reconstruction_temp}
  \widetilde{\mathbf{u}} = v\mathbf{x} + \bar{v}\bar{\mathbf{x}}.
\end{equation}
\end{proposition}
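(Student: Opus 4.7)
The plan is to derive the two-dimensional system in \eqref{LinearSystem_Stochastic6} by isolating the first two scalar components of the diagonalized system \eqref{LinearSystem_Stochastic4_temp2} and then to verify that the reconstruction formula \eqref{state_reconstruction_temp} follows by applying \eqref{LinearSystem_Stochastic4_temp3} to just those two modes.

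First, I would recall the decomposition from the discussion preceding the proposition. Since $\mathbf{X}$ is formed by the right eigenvectors of $\mathbf{M}$ and $\mathbf{X}^{-1}$ by the corresponding left eigenvectors (both normalized so that $\mathbf{X}^{-1}\mathbf{X} = \mathbf{I}$), the first row $\mathbf{z}^\mathtt{T}$ of $\mathbf{X}^{-1}$ and the first column $\mathbf{x}$ of $\mathbf{X}$ satisfy $\mathbf{z}^\mathtt{T}\mathbf{x} = 1$ and $\mathbf{z}^\mathtt{T}\mathbf{x}_j = 0$ for any other right eigenvector $\mathbf{x}_j$ in $\mathbf{X}$; likewise for the conjugate pair $\bar{\mathbf{z}}^\mathtt{T}, \bar{\mathbf{x}}$. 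Then $v = \mathbf{z}^\mathtt{T}\mathbf{u}$ is precisely the first entry of the diagonalized vector $\mathbf{v} = \mathbf{X}^{-1}\mathbf{u}$, and $\bar{v}$ the second.

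Next, I would act on \eqref{LinearSystem_Stochastic} from the left by $\mathbf{z}^\mathtt{T}$. Because $\mathbf{z}^\mathtt{T}\mathbf{M} = \lambda_o\mathbf{z}^\mathtt{T}$ (the left-eigenvector relation), one obtains
\begin{equation*}
\frac{\d v}{\d t} = \mathbf{z}^\mathtt{T}\frac{\d \mathbf{u}}{\d t} = \mathbf{z}^\mathtt{T}\mathbf{M}\mathbf{u} + \mathbf{z}^\mathtt{T}\mathbf{F} = \lambda_o v + g,
\end{equation*}
with $g = \mathbf{z}^\mathtt{T}\mathbf{F} = \mathbf{z}^\mathtt{T}\mathbf{s}_\mathbf{u}a_p$. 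The same computation with $\bar{\mathbf{z}}^\mathtt{T}$ yields the conjugate equation, giving \eqref{LinearSystem_Stochastic6}. Equivalently, this can be read off directly as the first two scalar rows of \eqref{LinearSystem_Stochastic4_temp2}, since $\mathbf{\Lambda}$ is diagonal and hence those rows decouple from the remaining $3N_O-2$ components; I would mention both viewpoints for clarity.

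Finally, for \eqref{state_reconstruction_temp}, I would invoke \eqref{LinearSystem_Stochastic4_temp3}, $\mathbf{u} = \mathbf{X}\mathbf{v} = \sum_{k=1}^{3N_O} v_k \mathbf{x}_k$, and simply truncate the sum to its first two terms (the conjugate pair) to define the low-order reconstruction $\widetilde{\mathbf{u}} = v\mathbf{x} + \bar{v}\bar{\mathbf{x}}$. Since $\mathbf{u}$ is real and the two retained modes form a complex-conjugate pair, $\widetilde{\mathbf{u}}$ is automatically real-valued, which is worth noting. There is no genuine obstacle here; the argument is pure linear algebra applied to the already-established eigendecomposition \eqref{MLambda}. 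The only point requiring care is bookkeeping of the left/right eigenvector normalization $\mathbf{z}^\mathtt{T}\mathbf{x} = \bar{\mathbf{z}}^\mathtt{T}\bar{\mathbf{x}} = 1$, which is what makes $v = \mathbf{z}^\mathtt{T}\mathbf{u}$ agree with the first entry of $\mathbf{X}^{-1}\mathbf{u}$ and thereby makes the scalar reduction consistent with the global change of coordinates.
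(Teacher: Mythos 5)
Your proposal is correct and follows essentially the same route as the paper: the paper obtains \eqref{LinearSystem_Stochastic6} by reading off the first two decoupled rows of the diagonalized system \eqref{LinearSystem_Stochastic4_temp2} (equivalently, your left-eigenvector computation $\mathbf{z}^\mathtt{T}\mathbf{M}=\lambda_o\mathbf{z}^\mathtt{T}$), and obtains \eqref{state_reconstruction_temp} by truncating $\mathbf{u}=\mathbf{X}\mathbf{v}$ to the leading conjugate pair. Your added remark that the conjugate pairing makes $\widetilde{\mathbf{u}}$ automatically real is a worthwhile observation the paper leaves implicit.
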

Figure \ref{Eigenvector} shows the eigenvector associated with $\lambda_o$ for different model components.

It is worthwhile to highlight that, as was shown in \eqref{define_u}, the solution $\widetilde{\mathbf{u}}$ consists of the ocean Kelvin wave $K_O$, the ocean Rossby wave $R_O$, and the SST $T$. Recall in \eqref{physical_reconstruction} that the ocean current $U$ and the thermocline depth $H$ are the linear combinations of $K_O$ and $R_O$. In addition, the atmosphere wind $u$ and the potential temperature $\theta$ can be solved by exploiting the linear relationship with the ocean and SST variables according to \eqref{Combined_system} (see Appendix for details). Therefore, the two-dimensional low-order conceptual model \eqref{LinearSystem_Stochastic6} provides the solutions for all the atmosphere, ocean, and SST fields with detailed spatiotemporal solutions. This contrasts with many existing conceptual models, which focus only on a small set of specified state variables and often merely involve the averaged quantities over a large domain.

\begin{figure}[ht]
\hspace*{-0cm}\includegraphics[width=1.0\textwidth]{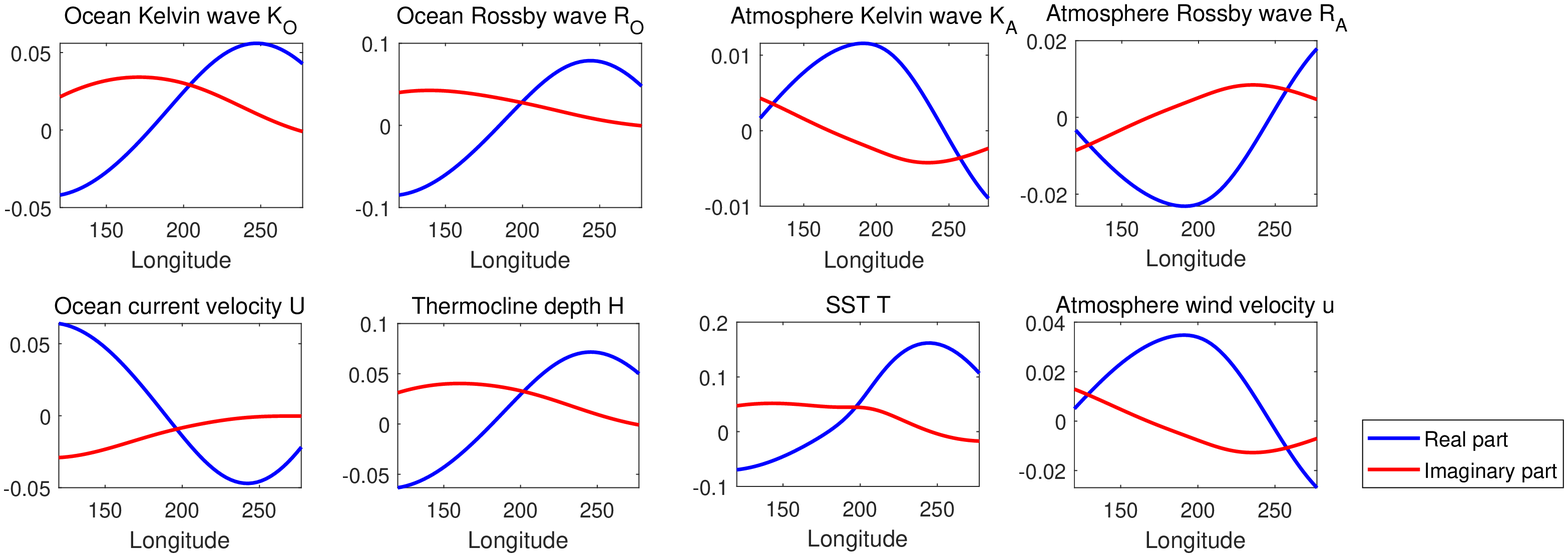}
\caption{The eigenvector associated with $\lambda_o$ corresponds to different variable components. }\label{Eigenvector}
\end{figure}

\subsection{Two-dimensional forced oscillator with arbitrarily specified quantities as state variables}
The general framework in \eqref{LinearSystem_Stochastic6} also allows the development of the reduced-order models with arbitrarily specified quantities (in addition to the direct model output $u$, $\theta$, $H$, $U$ and $T$) as state variables. This can be achieved by introducing two new real-valued state variables $\widetilde{a}$ and $\widetilde{b}$, which are the linear combinations of $v$ and $\bar{v}$ by angle rotations and amplifications,
\begin{subequations}\label{def_tilde}
\begin{align}
  \widetilde{a} &= \left(e^{i\alpha}v + e^{-i\alpha}\bar{v} \right)A_a,\label{def_tildea}\\
  \widetilde{b} &= \left(e^{i\beta}v + e^{-i\beta}\bar{v} \right)A_b.\label{def_tildeb}
\end{align}
\end{subequations}
Here $\alpha$ and $\beta$ are rotating angles to be in phase with the two new physical variables of interest while the real variables $A_a$ and $A_b$ are the amplitudes. Here, it is required that $\alpha,\beta\in[-\pi,\pi]$ and $\beta\neq\alpha$ to rule out the degenerated case.
The two variables $\widetilde{a}$ and $\widetilde{b}$ can be defined as any variables that can be represented by a linear combination of the model state variables.

\begin{proposition}\label{Prop:reconstruction_u_arbitrary}
The approximate solution of $\mathbf{u}$ based on the two arbitrary state variables $\widetilde{a}$ and $\widetilde{b}$  is given by
\begin{equation}\label{reconstruction_u_arbitrary_state_variables}
  {\widetilde{\mathbf{u}}} = \mathbf{x}_{\tilde{a}}\widetilde{a} + \mathbf{x}_{\tilde{b}}\widetilde{b},
\end{equation}
where
\begin{small}
\begin{equation}\label{Eigvec_new}
\begin{split}
\mathbf{x}_{\tilde{a}} &= \frac{1}{A_a}\left(\frac{e^{i\alpha}\mathbf{x}}{e^{i2\alpha}-e^{i2\beta}} + \frac{e^{-i\alpha}\bar{\mathbf{x}}}{e^{-i2\alpha}-e^{-i2\beta}}\right),\\
\mathbf{x}_{\tilde{b}} &= - \frac{1}{A_b}\left(\frac{e^{i\beta}\mathbf{x}}{e^{i2\alpha}-e^{i2\beta}} + \frac{e^{-i\beta}\bar{\mathbf{x}}}{e^{-i2\alpha}-e^{-i2\beta}}\right)
\end{split}
\end{equation}
\end{small}
are the new spatial bases (eigenvectors) associated with $\widetilde{a}$ and $\widetilde{b}$, which are real-valued.
\end{proposition}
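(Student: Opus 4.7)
The plan is to reduce Proposition \ref{Prop:reconstruction_u_arbitrary} to a single $2\times 2$ change of basis between $(v,\bar v)$ and $(\widetilde a,\widetilde b)$, and then to substitute into the reconstruction formula \eqref{state_reconstruction_temp} from the preceding proposition. Writing \eqref{def_tilde} in matrix form,
\[
\begin{pmatrix}\widetilde a/A_a\\ \widetilde b/A_b\end{pmatrix}
= \begin{pmatrix} e^{i\alpha} & e^{-i\alpha}\\ e^{i\beta} & e^{-i\beta}\end{pmatrix}\begin{pmatrix} v\\ \bar v\end{pmatrix},
\]
the determinant of the coefficient matrix is $e^{i(\alpha-\beta)}-e^{-i(\alpha-\beta)} = 2i\sin(\alpha-\beta)$. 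The first preliminary step is therefore to make the non-degeneracy condition precise: $\alpha\neq\beta$ must be strengthened to $\alpha-\beta\not\equiv 0\pmod{\pi}$ so that this determinant is nonzero and the system can be inverted.

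Carrying out the inversion gives
\[
v = \frac{1}{2i\sin(\alpha-\beta)}\left(\frac{e^{-i\beta}}{A_a}\widetilde a - \frac{e^{-i\alpha}}{A_b}\widetilde b\right),
\]
and $\bar v$ is its complex conjugate (which is consistent with $\widetilde a,\widetilde b$ being real). Substituting into $\widetilde{\mathbf{u}} = v\mathbf{x} + \bar v\bar{\mathbf{x}}$ and collecting the coefficients of $\widetilde a$ and $\widetilde b$ yields
\[
\mathbf{x}_{\tilde a} = \frac{1}{2iA_a\sin(\alpha-\beta)}\bigl(e^{-i\beta}\mathbf{x} - e^{i\beta}\bar{\mathbf{x}}\bigr),\quad
\mathbf{x}_{\tilde b} = -\frac{1}{2iA_b\sin(\alpha-\beta)}\bigl(e^{-i\alpha}\mathbf{x} - e^{i\alpha}\bar{\mathbf{x}}\bigr).
\]

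To match the form advertised in \eqref{Eigvec_new}, I would apply the identity $\frac{e^{i\alpha}}{e^{i2\alpha}-e^{i2\beta}} = \frac{e^{-i\beta}}{2i\sin(\alpha-\beta)}$, obtained by factoring $e^{i(\alpha+\beta)}$ out of the denominator, together with its complex conjugate and the two analogous identities with $\alpha$ and $\beta$ swapped. This is pure algebraic bookkeeping. Reality of $\mathbf{x}_{\tilde a}$ and $\mathbf{x}_{\tilde b}$ is then immediate, since each has the structure $\frac{1}{2i\sin(\alpha-\beta)}\bigl(z\mathbf{x} - \bar z\bar{\mathbf{x}}\bigr) = \frac{\mathrm{Im}(z\mathbf{x})}{\sin(\alpha-\beta)}$, which is manifestly real because $\bar{\mathbf{x}}$ is the complex conjugate of $\mathbf{x}$.

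I do not anticipate any serious obstacle: the result is essentially linear algebra once \eqref{state_reconstruction_temp} is in hand. The only points requiring modest care are (i) making the non-degeneracy hypothesis strong enough to guarantee $\sin(\alpha-\beta)\neq 0$ rather than merely $\alpha\neq\beta$, and (ii) tracking signs and complex conjugates when recasting the clean inverse-matrix coefficients into the less symmetric form \eqref{Eigvec_new} chosen by the authors.
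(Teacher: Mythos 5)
Your proposal is correct and follows essentially the same route as the paper: both solve the $2\times 2$ linear system \eqref{def_tilde} for $(v,\bar v)$ in terms of $(\widetilde a,\widetilde b)$ (the paper by multiplying the two equations by $A_be^{i\alpha}$ and $A_ae^{i\beta}$ and subtracting, you by inverting the coefficient matrix) and then substitute into $\widetilde{\mathbf{u}}=v\mathbf{x}+\bar v\bar{\mathbf{x}}$, your expressions agreeing with \eqref{Eigvec_new} via $e^{i2\alpha}-e^{i2\beta}=2ie^{i(\alpha+\beta)}\sin(\alpha-\beta)$. Your one genuine addition is worth keeping: the paper's hypothesis $\alpha\neq\beta$ with $\alpha,\beta\in[-\pi,\pi]$ does not by itself make the denominator nonzero (e.g.\ $\alpha-\beta=\pi$ gives $e^{i2\alpha}=e^{i2\beta}$ and $\widetilde a\propto\widetilde b$), so the non-degeneracy condition should indeed read $\alpha-\beta\not\equiv 0\pmod{\pi}$.
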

\begin{proof}
Multiplying \eqref{def_tildea} by $A_be^{i\alpha}$ and multiplying \eqref{def_tildeb} by $A_ae^{i\beta}$ lead to
\begin{equation*}
\begin{split}
  A_be^{i\alpha}\widetilde{a} &= A_aA_be^{i2\alpha}v + A_aA_b \bar{v},\\
  A_ae^{i\beta}\widetilde{b} &= A_aA_be^{i2\beta}v + A_aA_b \bar{v}.
\end{split}
\end{equation*}
Taking the difference between the above two equations yields
\begin{equation}\label{expression_v}
  v = \frac{1}{A_aA_b}\frac{A_be^{i\alpha}\widetilde{a}-A_ae^{i\beta}\widetilde{b}}{e^{i2\alpha}-e^{i2\beta}},\qquad\mbox{and}\qquad \bar{v} = \frac{1}{A_aA_b}\frac{A_be^{-i\alpha}\widetilde{a}-A_ae^{-i\beta}\widetilde{b}}{e^{-i2\alpha}-e^{-i2\beta}}.
\end{equation}
Recall in \eqref{state_reconstruction_temp} that the full spatiotemporal evolution of the solution is given by $\widetilde{\mathbf{u}} = v\mathbf{x} + \bar{v}\bar{\mathbf{x}}$. Plugging \eqref{expression_v} into \eqref{state_reconstruction_temp} gives
\begin{equation}\label{state_reconstruction3}
\begin{split}
  {\widetilde{\mathbf{u}}} &= { v\mathbf{x} + \bar{v}\bar{\mathbf{x}}}\\
  &= \frac{1}{A_aA_b}\frac{A_be^{i\alpha}\widetilde{a}-A_ae^{i\beta}\widetilde{b}}{e^{i2\alpha}-e^{i2\beta}}\mathbf{x} + \frac{1}{A_aA_b}\frac{A_be^{-i\alpha}\widetilde{a}-A_ae^{-i\beta}\widetilde{b}}{e^{-i2\alpha}-e^{-i2\beta}}\bar{\mathbf{x}}\\
  & = \frac{1}{A_a}\left(\frac{e^{i\alpha}\mathbf{x}}{e^{i2\alpha}-e^{i2\beta}} + \frac{e^{-i\alpha}\bar{\mathbf{x}}}{e^{-i2\alpha}-e^{-i2\beta}}\right)\widetilde{a} \\
  &\quad\ - \frac{1}{A_b}\left(\frac{e^{i\beta}\mathbf{x}}{e^{i2\alpha}-e^{i2\beta}} + \frac{e^{-i\beta}\bar{\mathbf{x}}}{e^{-i2\alpha}-e^{-i2\beta}}\right)\widetilde{b}\\
   { :}&{ = \mathbf{x}_{\tilde{a}}\widetilde{a} + \mathbf{x}_{\tilde{b}}\widetilde{b}}.
\end{split}
\end{equation}
\end{proof}

The above framework is adapted to the cases where the main interest lies in certain averaged quantities, as in many of the classical conceptual models for the ENSO. For example, $\widetilde{a}$ can be the SST in the eastern Pacific and $\widetilde{b}$ the thermocline in the western Pacific. They can also be the atmosphere wind and ocean current averaged over the entire Pacific or a specific region. In such situations, the amplitudes $A_a$ and $A_b$ can be determined by two scalars $x_{\tilde{a}}$ and $x_{\tilde{b}}$, which are the averaged values of $\mathbf{x}_{\tilde{a}}$ and $\mathbf{x}_{\tilde{b}}$ over a certain domain. This allows the development of a two-dimensional system for $\widetilde{a}$ and $\widetilde{b}$ that is linked with the traditional conceptual models.

\begin{proposition}\label{Prop:TwoDimensionalModel}
Let $x_a := A_ae^{i\alpha}$ and $x_b := A_be^{i\beta}$ such that in \eqref{def_tilde} $\widetilde{a} = x_av + \bar{x}_a\bar{v}$ and $\widetilde{b} = x_bv + \bar{x}_b\bar{v}$. Then the two-dimensional conceptual model for $\widetilde{a}$ and $\widetilde{b}$ yields
\begin{equation}\label{Two_dimensional_model}
\begin{split}
  \frac{\d\widetilde{a}}{\d t}  &= -d_o  \widetilde{a} + \widetilde{c}_{11}\widetilde{a} + \widetilde{c}_{12}\widetilde{b} + 2g_{a,Re},\\
  \frac{\d\widetilde{b}}{\d t}  &= -d_o  \widetilde{b} + \widetilde{c}_{21}\widetilde{a} + \widetilde{c}_{22}\widetilde{b} + 2g_{b,Re},
\end{split}
\end{equation}
where all the coefficients are real. In \eqref{Two_dimensional_model},
\begin{equation}\label{C_Matrix_Det}
  \widetilde{C} =\left(
    \begin{array}{cc}
      \widetilde{c}_{11} & \widetilde{c}_{12} \\
      \widetilde{c}_{21} & \widetilde{c}_{22} \\
    \end{array}
  \right)=
  -\frac{\omega_o}{1-c_{Re} d_{Re}}\left(
     \begin{array}{cc}
       c_{Re}d_{Im} & c_{Im} \\
       d_{Im} & d_{Re}c_{Im} \\
     \end{array}
   \right).
\end{equation}
with $c = c_{Re} + c_{Im} := \frac{x_a}{x_b}$ and $d = d_{Re} + d_{Im} := \frac{x_b}{x_a}$. The two forcing terms are given by $g_{a,Re} = Re(gx_a) = Re(\mathbf{z}^T\mathbf{s}_\mathbf{u}x_a)a_p$ and $g_{b,Re} = Re(gx_b)=Re(\mathbf{z}^T\mathbf{s}_\mathbf{u}x_b)a_p$.
\end{proposition}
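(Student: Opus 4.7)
The plan is to change variables in the complex oscillator system \eqref{LinearSystem_Stochastic6} from $(v,\bar v)$ to $(\widetilde a,\widetilde b)$ via the linear map $\widetilde a = x_a v + \bar x_a \bar v$, $\widetilde b = x_b v + \bar x_b \bar v$ of the statement, and read off the resulting closed real system. The work breaks into three steps: (i) differentiate $\widetilde a,\widetilde b$ and substitute the ODEs for $v,\bar v$, splitting $\lambda_o = -d_o + i\omega_o$ so that the $-d_o$ part produces the diagonal dissipation term $-d_o\widetilde a$ (respectively $-d_o\widetilde b$); (ii) invert the $2\times 2$ change-of-basis $(v,\bar v)\mapsto(\widetilde a,\widetilde b)$ via Cramer's rule to convert the remaining $i\omega_o$-terms into a real linear combination of $\widetilde a$ and $\widetilde b$; (iii) rewrite the resulting coefficients using $c=x_a/x_b$ and $d=x_b/x_a$ and check that they agree with the matrix $\widetilde C$ in \eqref{C_Matrix_Det}.

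For step (i), differentiating $\widetilde a$ and using \eqref{LinearSystem_Stochastic6} yields
\[
\dot{\widetilde a}=x_a(\lambda_o v+g)+\bar x_a(\bar\lambda_o\bar v+\bar g)=-d_o\widetilde a+i\omega_o\bigl(x_av-\bar x_a\bar v\bigr)+2\operatorname{Re}(x_a g),
\]
where the last term is exactly $2g_{a,Re}$ by definition, and the analogous calculation with $x_b$ in place of $x_a$ gives the $\widetilde b$ equation. For step (ii), the determinant of the change of basis is $D=x_a\bar x_b-\bar x_a x_b=2i\operatorname{Im}(x_a\bar x_b)=2i A_aA_b\sin(\alpha-\beta)$, which is non-zero by the non-degeneracy hypothesis $\beta\neq\alpha$; Cramer's rule gives $v=(\bar x_b\widetilde a-\bar x_a\widetilde b)/D$ and $\bar v=(-x_b\widetilde a+x_a\widetilde b)/D$. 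Substituting into $i\omega_o(x_av-\bar x_a\bar v)$ and collecting terms (using $x_a\bar x_b+\bar x_a x_b=2\operatorname{Re}(x_a\bar x_b)$) produces a diagonal coefficient $\omega_o\operatorname{Re}(x_a\bar x_b)/\operatorname{Im}(x_a\bar x_b)$ and an off-diagonal coefficient $-\omega_o|x_a|^2/\operatorname{Im}(x_a\bar x_b)$; the symmetric manipulation for $\widetilde b$ produces $|x_b|^2$ in place of $|x_a|^2$ and the opposite sign on the diagonal.

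For step (iii), I would convert these four real numbers to the $c,d$ form by dividing numerators and denominators by $x_a\bar x_b$ and using the identity $cd=1$, which forces $c_{Re}d_{Re}-c_{Im}d_{Im}=1$ and $c_{Re}d_{Im}+c_{Im}d_{Re}=0$; in particular $1-c_{Re}d_{Re}=-c_{Im}d_{Im}=\sin^2(\alpha-\beta)$, which is the normalization appearing in \eqref{C_Matrix_Det}. A short line-by-line check then matches each of $\widetilde c_{11},\widetilde c_{12},\widetilde c_{21},\widetilde c_{22}$ against the claimed entries. The main obstacle is entirely algebraic bookkeeping: keeping signs straight, distinguishing the roles of $c$ and $d$ in the two rows, and resisting premature simplification, since the form \eqref{C_Matrix_Det} with denominator $1-c_{Re}d_{Re}$ is not the first normalization one meets if one starts on the $x_a,x_b$ side. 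No conceptual ingredient beyond the splitting $\lambda_o=-d_o+i\omega_o$ and Cramer's rule is needed.
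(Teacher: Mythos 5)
Your argument is correct and arrives at the same coefficient matrix, but it takes a genuinely different route from the paper's own proof. The paper works with the complex scalars $a=vx_a$ and $b=vx_b$, writes four real ODEs for $a_{Re},a_{Im},b_{Re},b_{Im}$, and then eliminates the two imaginary parts by an iterative substitution built on $a=cb$ and $b=da$ (inserting $a_{Im}=c_{Im}b_{Re}+c_{Re}b_{Im}$, then $b_{Im}=d_{Im}a_{Re}+d_{Re}a_{Im}$, and finally cancelling the reintroduced $a_{Im}$ by subtracting a $c_{Re}d_{Re}$-multiple of the original equation); that elimination is what makes the denominator $1-c_{Re}d_{Re}$ appear directly. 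You instead differentiate $\widetilde a,\widetilde b$ themselves, isolate the term $i\omega_o(x_av-\bar x_a\bar v)$, and invert the change of basis by Cramer's rule, landing on coefficients expressed through $\operatorname{Re}(x_a\bar x_b)$, $\operatorname{Im}(x_a\bar x_b)$, $|x_a|^2$, $|x_b|^2$. Your computations check out: the determinant is $2i\operatorname{Im}(x_a\bar x_b)=2iA_aA_b\sin(\alpha-\beta)$, the forcing term $x_ag+\bar x_a\bar g=2\operatorname{Re}(x_ag)=2g_{a,Re}$ is exactly as claimed, and the final translation via $cd=1$, giving $1-c_{Re}d_{Re}=-c_{Im}d_{Im}=\sin^2(\alpha-\beta)$, correctly recovers \eqref{C_Matrix_Det} (e.g. $-\omega_o c_{Re}d_{Im}/(1-c_{Re}d_{Re})=\omega_o\cot(\alpha-\beta)=\omega_o\operatorname{Re}(x_a\bar x_b)/\operatorname{Im}(x_a\bar x_b)$, matching your diagonal entry, and $\widetilde c_{22}=-\widetilde c_{11}$ as you note). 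What your route buys is a one-shot linear-algebra computation that makes the non-degeneracy condition explicit as non-vanishing of the determinant --- which, incidentally, requires $\alpha-\beta\neq 0,\pm\pi$, slightly stronger than the stated $\beta\neq\alpha$, a gap present in the paper as well; what the paper's elimination buys is that the answer emerges already in the $(c,d)$ normalization of \eqref{C_Matrix_Det} without your final conversion step.
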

See the Appendix for the proofs.

\subsection{Special case: stochastic discharge-recharge oscillator}
The discharge-recharge oscillator argues that discharge and recharge of equatorial heat content cause the coupled system to oscillate \cite{jin1997equatorial}. Define $\widetilde{a}:=T_E$ and $\tilde{b}:=H_W$ in \eqref{Two_dimensional_model}, representing the averaged SST in the eastern Pacific and the averaged thermocline depth in the western Pacific. The corresponding scalars $x_{T_E}$ and $x_{H_W}$ are given by
\begin{equation}\label{scalars_discharge}
    x_{T_E}=0.1302+0.0106i \qquad \mbox{and} \qquad x_{H_W}=-0.0279+0.0375i,
\end{equation}
which are computed by averaging the eigenvectors associated with $T$ and $H$ over the eastern Pacific and western Pacific, respectively. See Figure \ref{basis}.

\begin{figure}[ht]
\centering\hspace*{-0cm}\includegraphics[width=0.5\textwidth]{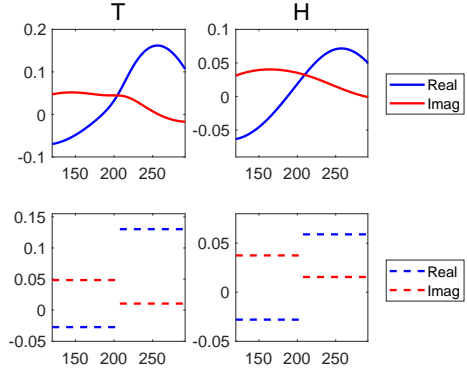}
\caption{The eigenvectors of $v$ associated with $T$ and $H$ (top) and their averaged values over the eastern and western Pacific (bottom). }\label{basis}
\end{figure}

The other coefficients in \eqref{Two_dimensional_model} can be computed using the eigenvalues of the leading two modes $\lambda_o$ as well as $x_{T_E}$ and $x_{H_W}$ in \eqref{scalars_discharge}. They are:
\begin{equation}\label{Coefficient_C}
\widetilde{C}=\left(\begin{array}{cc}
\widetilde{c}_{11} & \widetilde{c}_{12} \\
\widetilde{c}_{21} & \widetilde{c}_{22}
\end{array}\right)=\left(\begin{array}{cc}
0.0752 & 0.3965 \\
-0.0509 & -0.0752
\end{array}\right)
\end{equation}
and
\begin{equation}\label{Coefficient_alpha}
\alpha_{T_E}=2 \operatorname{Re}\left(\mathbf{z}^T \mathbf{s}_{\mathbf{u}} x_{T_E}\right)=1.0094, \quad \alpha_{H_W}=2 \operatorname{Re}\left(\mathbf{z}^T \mathbf{s}_{\mathbf{u}} x_{H_W}\right)=-0.4217.
\end{equation}
The remaining thing is to determine the multiplicative noise coefficient $\sigma_p(T_W)$ of the stochastic amplitude $a_p$ in \eqref{ap_equation}, which is a function of the averaged SST in the western Pacific $T_W$. Yet, $T_W$ is not the state variable in the two-dimensional discharge-recharge oscillator model. Nevertheless, $T_W$ can be written as a linear function of $T_E$ and $H_W$.
Recall that
\begin{equation}
T_E=x_{T_E} v+\bar{x}_{T_E} \bar{v} \qquad\mbox{and}\qquad H_W =x_{H_W} v+\bar{x}_{H_W} \bar{v},
\end{equation}
which can be used to solve $\mbox{Re}(v)$ and $\mbox{Im}(v)$,
\begin{equation}
\begin{split}
\mbox{Re}(v)&=\frac{T_E \mbox{Im}(x_{H_W})-H_W \mbox{Im}(x_{T_E})}{2(\mbox{Re}(x_{T_E})\mbox{Im}(x_{H_W}) - \mbox{Im}(x_{T_E}) \mbox{Re}(x_{H_W}))}, \\
\mbox{Im}(v)&=\frac{T_E \mbox{Re}(x_{H_W})-H_W \mbox{Re}(x_{T_E})}{2(\mbox{Re}(x_{T_E})\mbox{Im}(x_{H_W}) - \mbox{Im}(x_{T_E}) \mbox{Re}(x_{H_W}))}. \\
\end{split}
\end{equation}
Once the basis $v$ is determined, the value of $T_W$ can be computed,
\begin{equation}
T_W=x_{T_W} v+\bar{x}_{T_W} \bar{v}
\end{equation}
which are functions of $H_W$ and $T_E$, and $x_{T_W}$ is given by
\begin{equation}
x_{T_W} = -0.0272 + 0.0484i.
\end{equation}
The direct representation of the $T_W$ by $T_E$ also justifies and unifies the use of SST in different regions (averaged values over the eastern Pacific, over the western Pacific or basin averaged value) as multiplicative noise coefficients in various conceptual models \cite{an2020fokker, an2020enso, levine2017simple, giorgini2022non}.

After obtaining the value of $T_W$, the multiplicative noise coefficient 
$\sigma_p(T_W)$ is determined. Note that the $\sigma_p(T_W)$ is slightly modified to $\sigma_{p 0}=0.2$ and $\sigma_{p 1}=1.8$ to compensate for the approximation error from the low-order representation. 

According to \eqref{Coefficient_C}, the coefficient $\widetilde{c}_{11}$ and $\widetilde{c}_{12}$ associated with $T_E$ equations are both positive. The coefficient $\widetilde{c}_{11}>0$ represents the Bjerknes positive feedback while $\widetilde{c}_{12}>0$ is the recharge mechanism. On the other hand, a negative coefficient $\widetilde{c}_{21}$ appears in the equation of $H_W$, which indicates the discharge mechanism. Another finding is that, although the wind bursts directly impact the SST in heuristic thinking, the stochastic forcing appears in both the equations of $T_E$ and $H_W$ for the completeness of the stochastic discharge-recharge diagram. The opposite signs of $\alpha_{T_E}$ and $\alpha_{H_W}$ imply that the increase of SST in the eastern Pacific by the WWBs will simultaneously cause a decrease of the thermocline in the western Pacific.

Figure \ref{time_series_final} shows the time series and phase plot of both the deterministic and stochastic versions of the discharge-recharge models developed above. The deterministic model (by setting the damping to be zero and ignoring the stochastic terms) leads to a regular oscillator while the time series of $H_W$ and $T_E$ generated from the stochastic model mimic those in observations.
It is worth highlighting that the phase difference between $H_W$ and $T_E$ is $122^o$ (not $\pi/2$ or $90^o$), which is also confirmed by the lag between the two time series. Such a phase difference also resembles that from the observations. In addition, the observational time series and the stochastic model simulation of $T_E$ are compared. Due to the random noise, the trajectories from the stochastic model and the observations do not expect to have a one-to-one point-wise match between each other. Nevertheless, the stochastic model simulation succeeds in capturing the EP events. Finally, the PDF and the autocorrelation function (ACF) of $T_E$ from the stochastic model have very similar behavior to the observations. The PDF is skewed with a one-sided fat tail, which indicates that the model can reproduce a sufficient number of the super El Ni\~no events. Similarly, matching the ACF with observations implies that the time series of $T_E$ from the model has, on average, the same temporal correlation (namely the time scale of the memory effect) as the observations. These findings suggest that the low-order model can capture the observed dynamical and statistical features of the time series $T_E$, which is an important indicator for the eastern Pacific El Ni\~no.

\begin{figure}[ht]
\centering\hspace*{-0cm}\includegraphics[width=1\textwidth]{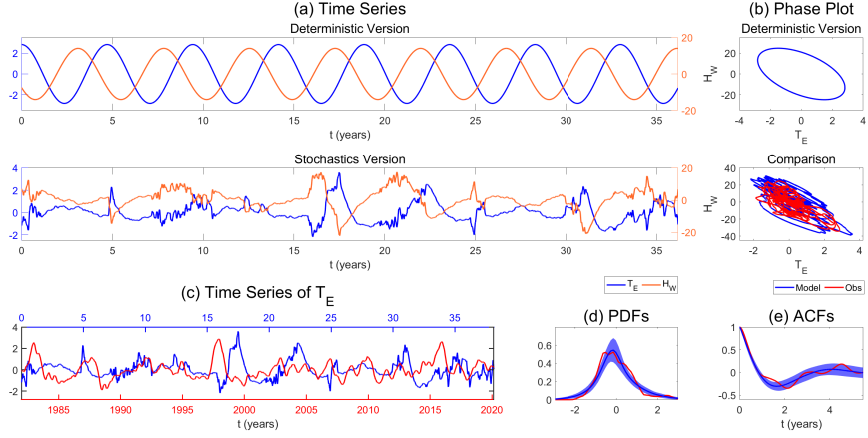}
\caption{ Comparison of the time series and the phase plot between the deterministic and the stochastic versions of the discharge-recharge model; the time series, PDFs, and ACFs of $T_E$ between the stochastic version and the observations. Panel(a): The time series of $T_E$ and $H_W$ generated from two versions of the discharge-recharge model. Panel (b): The phase plot of $T_E$ and $H_W$ from two versions of the model, and observations. Here, the $H_W$ in two version models are multiplied by a constant to keep the same standard deviation of the observations. Panel (c): The time series of $T_E$ from the stochastic version of the model and observations. Panel (d): The PDFs of $T_E$ for the solution to the stochastic version of the model and observations. Panel (e): The ACFs of $T_E$ from the stochastic version of the model and observations. In Panel (d)-(e), red and blue curves are for the observation and stochastic model, respectively. For the stochastic model, the total 20000-year-long simulation is divided into 500 non-overlapping segments, each of which has a 40-year period as the observation. Then the average (blue line) and its one standard deviation intervals (shading) are illustrated.}\label{time_series_final}
\end{figure}

\subsection{Special case: stochastic delayed oscillator}
The two-dimensional stochastic oscillator \eqref{Two_dimensional_model} can be further reduced to a one-dimensional oscillator with a delayed effect, which leads to the delayed oscillator of the ENSO.
\begin{proposition}\label{Prop:delayed_oscillator}
Defining $\widetilde{a}:=T_E$, the averaged SST over the eastern Pacific. The delayed oscillator is given by
\begin{equation}\label{Delayed_General}
\begin{split}
  \frac{\d T_E(t)}{\d t} & =-d_o  T_E(t) + \widetilde{c}_{11}T_E(t) +\widetilde{c}_{12}c_ye^{-d_o\varphi/\omega_o}T_E(t-\varphi/\omega_o) + \alpha_{T_E}a_p(t)\\
    &\qquad+ 2c_y\widetilde{c}_{12}\int_{t-\varphi/\omega_o}^t e^{-d_o(t-s)}\bigg[y_{1,Re}\Big(\cos(\omega)\alpha_{T_E,1}-\sin(\omega)\alpha_{T_E,2}\Big)\\
   &\qquad\qquad\qquad-y_{1,Im}\Big(\cos(\omega)\alpha_{T_E,2}+\sin(\omega)\alpha_{T_E,1}\Big)\bigg]a_p(s)\d s
\end{split}
\end{equation}
where $\varphi:=\alpha-\beta$ and $\omega = \omega_o(t-s)-\varphi$. The coefficients $\widetilde{c}_{11}$, $\widetilde{c}_{12}$ and $\alpha_{T_E}$ are the same as those in \eqref{Coefficient_C} and \eqref{Coefficient_alpha}.
The other two noise coefficients $\alpha_{T_E,1}$ and $\alpha_{T_E,2}$ are given by
\begin{equation*}
\begin{split}
  \alpha_{T_E,1} &= Re\Big(r_{1}2Re(\mathbf{z}^T\mathbf{s}_\mathbf{u}x_{T_E}) + r_{2}2Re(\mathbf{z}^T\mathbf{s}_\mathbf{u}x_{H_W}) \Big),\\
  \alpha_{T_E,2} &= Im\Big(r_{1}2Re(\mathbf{z}^T\mathbf{s}_\mathbf{u}x_{T_E}) + r_{2}2Re(\mathbf{z}^T\mathbf{s}_\mathbf{u}x_{H_W}) \Big),
\end{split}
\end{equation*}
Here, $r_1$ and $r_2$ are solved as follows. As the matrix $\widetilde{C}$ in \eqref{C_Matrix_Det} has two distinct eigenvalues, which are also complex conjugates, it can be diagonalized $\widetilde{C} = \mathbf{Y}\Lambda\mathbf{Y}^{-1}$. Denote by
\begin{equation*}
 \mathbf{Y} = \left(
                \begin{array}{cc}
                 y_1  & \bar{y}_1 \\
                 y_2  & \bar{y}_2 \\
                \end{array}
              \right)\qquad\mbox{and}\qquad
 \mathbf{Y}^{-1} = \left(
                \begin{array}{cc}
                  r_1 & r_2 \\
                  \bar{r}_1 & \bar{r}_2 \\
                \end{array}
              \right)
\end{equation*}
where the fact that the two eigenvectors are complex conjugates has been utilized. In addition, the angle difference $\varphi$ and the amplitude difference $c_y$ are given by
\begin{equation}\label{angle_amplitude}
  M_{y_2} = c_yM_{y_1}\varPhi,
\end{equation}
where
\begin{equation*}
\begin{split}
  M_{y_1} = \left(
    \begin{array}{cc}
      y_{1,Re} & -y_{1,Im} \\
      -y_{1,Im} & -y_{1,Re} \\
    \end{array}
  \right)\qquad&
  M_{y_2} = \left(
    \begin{array}{cc}
      y_{2,Re} & -y_{2,Im} \\
      -y_{2,Im} & -y_{2,Re} \\
    \end{array}
  \right),\\
  \varPhi = \left(
    \begin{array}{cc}
      \cos\varphi & \sin\varphi \\
      -\sin\varphi & \cos\varphi \\
    \end{array}
  \right),\qquad& c_y = \frac{\|y_2\|}{\|y_1\|},
\end{split}
\end{equation*}
with $y_1 = y_{1,Re} + iy_{1,Im}$ and $y_2 = y_{2,Re} + iy_{2,Im}$.
\end{proposition}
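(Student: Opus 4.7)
The plan is to reduce the linear two-dimensional system \eqref{Two_dimensional_model} for $(\widetilde a,\widetilde b)=(T_E,H_W)$ to a scalar delay equation for $T_E$, by diagonalising the coupling matrix $\widetilde C$ and then exploiting the fact that, for a complex-exponential mode with frequency $\omega_o$, multiplication by $e^{-i\varphi}$ is equivalent, up to a damping factor, to a time shift by $\varphi/\omega_o$.

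First I would examine $\widetilde C$ in \eqref{C_Matrix_Det}. Because $cd=1$ (since $c=x_a/x_b$ and $d=x_b/x_a$), one gets $\mathrm{tr}\,\widetilde C=0$ and, after a short computation, $\det\widetilde C=\omega_o^2$; hence $\widetilde C=\mathbf{Y}\Lambda\mathbf{Y}^{-1}$ with $\Lambda=\mathrm{diag}(i\omega_o,-i\omega_o)$ and $\mathbf Y,\mathbf Y^{-1}$ having the complex-conjugate structure displayed in the proposition. Transforming $(w,\bar w)^{\mathtt{T}}:=\mathbf Y^{-1}(\widetilde a,\widetilde b)^{\mathtt{T}}$ decouples \eqref{Two_dimensional_model} into the scalar complex equation $\dot w=(-d_o+i\omega_o)w+\eta(t)$ with $\eta(t)=2\bigl(r_1 g_{a,Re}+r_2 g_{b,Re}\bigr)$; inserting $2g_{a,Re}=\alpha_{T_E}a_p$ and $2g_{b,Re}=\alpha_{H_W}a_p$ rearranges to $\eta(t)=(\alpha_{T_E,1}+i\alpha_{T_E,2})a_p(t)$ with the coefficients stated in the proposition.

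Next, I would set up the polar relation $y_2=c_y y_1 e^{-i\varphi}$ with $\varphi=\arg y_1-\arg y_2$ and $c_y=\lVert y_2\rVert/\lVert y_1\rVert$; a direct check shows this is exactly equivalent to the matrix identity $M_{y_2}=c_y M_{y_1}\varPhi$ in \eqref{angle_amplitude}. The key step is then the Duhamel identity
\begin{equation*}
w(t)=e^{(-d_o+i\omega_o)\varphi/\omega_o}\,w(t-\varphi/\omega_o)+\int_{t-\varphi/\omega_o}^{t}e^{(-d_o+i\omega_o)(t-s)}\eta(s)\,\d s,
\end{equation*}
which, after multiplying through by $e^{-i\varphi}$ so that the oscillatory factor in the homogeneous term cancels, converts the phase shift into a genuine time delay. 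Taking $2\,\mathrm{Re}(\cdot)$ of $y_2 w(t)=c_y y_1 e^{-i\varphi}w(t)$ and using $\widetilde a=2\,\mathrm{Re}(y_1 w)$, $\widetilde b=2\,\mathrm{Re}(y_2 w)$ yields the representation
\begin{equation*}
\widetilde b(t)=c_y e^{-d_o\varphi/\omega_o}\,\widetilde a(t-\varphi/\omega_o)+2c_y\int_{t-\varphi/\omega_o}^{t}e^{-d_o(t-s)}\mathrm{Re}\bigl(y_1 e^{i(\omega_o(t-s)-\varphi)}\eta(s)\bigr)\,\d s.
\end{equation*}

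Finally, I substitute this expression for $\widetilde b$ into the first line of \eqref{Two_dimensional_model} with $\widetilde a=T_E$; the instantaneous forcing contributes the $\alpha_{T_E}a_p(t)$ term, while expanding the real part of $y_1 e^{i\omega}(\alpha_{T_E,1}+i\alpha_{T_E,2})$ with $\omega=\omega_o(t-s)-\varphi$ reproduces the specific trigonometric combination $y_{1,Re}\bigl(\cos(\omega)\alpha_{T_E,1}-\sin(\omega)\alpha_{T_E,2}\bigr)-y_{1,Im}\bigl(\cos(\omega)\alpha_{T_E,2}+\sin(\omega)\alpha_{T_E,1}\bigr)$ inside the convolution in \eqref{Delayed_General}. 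The principal obstacle is purely bookkeeping: one must reconcile the two definitions of $\varphi$ (the angle inherited from $\alpha-\beta$ in \eqref{def_tilde} versus the polar angle between $y_1$ and $y_2$), track real-part signs through the $e^{i\omega}$ expansion, and verify that the noise decomposition $\eta=(\alpha_{T_E,1}+i\alpha_{T_E,2})a_p$ lines up with the stated formulas; once that alignment is made, \eqref{Delayed_General} drops out term by term.
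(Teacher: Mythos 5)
Your proposal is correct and follows essentially the same route as the paper's proof: diagonalize the two-dimensional system via $\mathbf{Y}$, reduce to the single complex mode $\dot w=(-d_o+i\omega_o)w+f$ with $f=(\alpha_{T_E,1}+i\alpha_{T_E,2})a_p$, relate the two components through the rotation and scaling between $y_1$ and $y_2$, and convert the phase lag $\varphi$ into the time delay $\varphi/\omega_o$ before substituting back into the $T_E$ equation. The only difference is one of execution rather than substance: you obtain the delay relation for $\widetilde b$ directly from the Duhamel semigroup identity on $[t-\varphi/\omega_o,\,t]$ together with the polar form $y_2=c_y y_1 e^{-i\varphi}$ (which you correctly identify as equivalent to $M_{y_2}=c_yM_{y_1}\varPhi$), whereas the paper writes out the full solutions $u_1,u_2$ from $t=0$, separates initial-value and forcing contributions, and matches them using the real $2\times2$ matrices $M_{y_1},M_{y_2}$; your version is a tidier derivation of the same key step, and your explicit check that $\operatorname{tr}\widetilde C=0$ and $\det\widetilde C=\omega_o^2$ (via $cd=1$) fills in a claim the paper only asserts.
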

See the Appendix for the proofs. Numerically, these constants and coefficients are given by
\begin{equation*}
\begin{gathered}
  y_1 = 0.9414,\  y_2 = -0.1785 - i0.2861,\   r_1 = 0.5311+i0.3314,\  r_2 = i1.7476,\\
  \alpha_{T_E} = 1.0094,\ 
  \alpha_{T_E,1} = 0.5361,\ \alpha_{T_E,2} = 0.4024,\ 
  \varphi = 2.1287, \  c_y = 0.3582.
\end{gathered}
\end{equation*}

The detailed expression of the stochastic forcing with the memory effect derived here is an essential supplement to the original delayed oscillator model \cite{suarez1988delayed, battisti1989interannual}, which was deterministic. Several conclusions can be drawn from this delayed oscillator. First, In front of the delayed term $T_E\left(t-\varphi/\omega_o\right)$, there is a factor with an exponential decay $e^{-d_o\varphi/\omega_o}$. This is a crucial feature, which indicates that the contribution of the delayed signal will be affected by the damping effect and becomes weaker as time goes on. Second, there are two ways that stochastic noise contributes to the system. The term $\alpha_{T_E}a_p(t)$ provides an instantaneous response to the system, while the stochastic forcing inside the integral represents the contribution from the noise accumulation. The latter also has a delayed contribution. Notably, although the damping time of the noise $d_p$ is intraseasonal, the noise interacting with the coupled system can affect the interannual time scale. Third, similar to the discharge-recharge oscillator, the delay here is $\varphi=2.1287$ ($122^o$) instead of $\pi/2$ ($90^o$). Finally, the sign of the delayed term depends on the sign of $\widetilde{c}_{12}$, which can be either positive or negative, depending on the state variables used for the model. The parameter $\widetilde{c}_{12}$ is positive in the above model since $\varphi=2.1287>\pi/2$. Yet, when the angle $\varphi$ is less than $\pi/2$, then $\widetilde{c}_{12}<0$ representing a negative delay, which is consistent with the discussions in \cite{battisti1989interannual}. In particular, if the average of the thermocline over the whole Pacific is adopted as was in \cite{battisti1989interannual}, then a negative delay (about $\pi/6$) appears.

\section{Conclusion and Discussions}\label{Sec:Conclusion}
In this paper, a framework of low-order stochastic conceptual models for the ENSO is systematically derived from a spatially-extended stochastic dynamical system with full mathematical rigor. It provides outputs for all the atmosphere, ocean, and SST components with detailed spatiotemporal patterns, which differs from many existing conceptual models focusing only on a small set of specified state variables. The framework also allows rigorously deriving the stochastic versions of the discharge-recharge oscillator and delayed oscillator models. Observational data is utilized to validate the skill of the low-order conceptual model in capturing the large-scale features of the ENSO in the eastern Pacific. The rigorous derivation of these low-order models provides a unique connection between models with different complexities. The framework developed here also facilitates understanding the instantaneous and memory effects of stochastic noise in contributing to the large-scale dynamics of ENSO.

The current modeling framework focuses on the eastern Pacific ENSO events. To characterize the ENSO diversity \cite{capotondi2015understanding} that also includes the central Pacific El Ni\~no, the ocean zonal advection mechanism should be incorporated in the starting spatially-extended model. It is worth noting that the dynamics of the central Pacific El Ni\~no often contain certain levels of nonlinearity \cite{zhao2021breakdown, chen2022multiscale}. Therefore, suitable nonlinearity needs to be added to the linear low-order conceptual models to better describe the central Pacific El Ni\~no, which is remained as a future work.

Despite the simple forms of conceptual models, they can be utilized as testbeds for guiding more sophisticated models. In particular, including stochastic noise or stochastic parameterizations in ICMs or GCMs is always a challenging topic. Therefore, testing various strategies based on conceptual models is practically useful. These simple conceptual models can also be utilized to build a hierarchy of models with different complexity for understanding the ENSO dynamics with the help of data-driven techniques.

\section*{Acknowledgements}
The research of N.C. is partially funded by ONR N00014-21-1-2904. Y.Z. is supported as a PhD research assistant under this grant. N.C. wish to thank his late advisor Andrew J. Majda for valuable discussions on some initial ideas of this work.

\section{Appendix}
\subsection{A summary of the parameters in the spatially-extended stochastic system}

This section will provide details for parameter values and dimensional values.
\begin{table}[H]
\centering
\scriptsize
\begin{tabular}{|ccc|}
\hline
Variable & Explanation & Value\\
\hline
$H$ & thermocline depth anomalies & $20.8 \mathrm{~m}$ \\
$T$ & sea surface temperature anomalies & $1.5 \mathrm{~K}$ \\
$U$ & zonal current speed anomalies & $0.25 \mathrm{~m} / \mathrm{s}$ \\
$t$ & time axis intraseasonal & 0.48 days \\
$u$ & zonal wind speed anomalies & $5 \mathrm{~m} / \mathrm{s}$\\
$\theta$ & potential temperature anomalies & $1.5 \mathrm{~K}$\\
\hline
\end{tabular}\caption{Physical dimensions of model parameters.}\label{Table:variable}
\end{table}

\begin{table}[H]
\centering
\scriptsize
 \begin{tabular}{|ccc|}
 \hline
   Parameter & Explanation & Value\\
   \hline
   %
   $\chi_A$ & atmospheric meridional projection coefficient & 0.325 \\
   $\chi_O$ & oceanic meridional projection coefficient & 1.3\\
   $L_A$ & equatorial belt length & 8/3 \\
   $L_O$ & equatorial Pacific length & 7/6 \\
   $\bar{Q}$ & mean vertical moisture gradient & 0.9 \\
   $\bar{T}$ & mean SST & 16.67 (which is 25$^o$C)\\
   $\gamma$ & wind stress coefficient & 6.53 \\
   $\zeta$ & latent heating exchange coefficient & 8.7 \\
   $\alpha_q$ & latent heating factor & $\alpha_q=q_cq_e\exp(q_e\bar{T})/\tau_q$ \\
   $q_c$ & latent heating multiplier coefficient & 7 \\
   $q_e$ & latent heating exponential coefficient & 0.093 \\
   $\tau_q$ & latent heating adjustment rate & 15 \\
   $r_W$ & western boundary reflection coefficient & 0.5 \\
   $r_E$ & eastern boundary reflection coefficient & 1 \\
   $c_1$ & modified ratio of phase speed & 0.5 \\
   $\eta$ & profile of thermocline feedback & $\eta(x)=1.5+\tanh(7.5(x-L_O/2))/2$ \\
   $d_A$ & additional damping & $10^{-8}$\\
   $d_p$ & wind burst damping & 3.4 (which is 3mon$^{-1}$) \\
   $s_p$ & wind burst zonal structure & $s_p(x)=\exp(-45(x-L_O/4)^2)$ \\
   $\sigma_{p0}$ & wind burst source of quiescent state & 0.2 \\
   $\sigma_{p1}$ & wind burst source of active state & 2.6 \\
   $\mu_{10}$ & transition rate from the active to quiescent state & $\mu_{10}= \left(1-\tanh \left(2 T_W\right)\right)/4$\\
   $\mu_{01}$ & transition rate from the quiescent to active state & $\mu_{01}= \left(\tanh \left(2 T_W\right)+1\right)/8$\\
 \hline
 \end{tabular} \caption{Model parameter values.}\label{Table:param}
 \end{table}

\subsection{Determining the matrix $\mathbf{M}$ in \eqref{LinearSystem}}
Recall the system in \eqref{LinearSystem}. Rewrite it in terms of the components with respect to the three set of the variables $\mathbf{K_O}$, $\mathbf{R_O}$ and $\mathbf{T}$:
\begin{equation}\label{LinearSystemMatrixForm}
    \frac{\d}{\d t}\left(
                  \begin{array}{c}
                    \mathbf{K_O} \\
                    \mathbf{R_O} \\
                    \mathbf{T} \\
                  \end{array}
                \right) = \left(
                            \begin{array}{ccc}
                              \mathbf{M}_{11} & \mathbf{M}_{12} & \mathbf{M}_{13} \\
                              \mathbf{M}_{21} & \mathbf{M}_{22} & \mathbf{M}_{23} \\
                              \mathbf{M}_{31} & \mathbf{M}_{32} & \mathbf{M}_{33} \\
                            \end{array}
                          \right) \left(
                  \begin{array}{c}
                    \mathbf{K_O} \\
                    \mathbf{R_O} \\
                    \mathbf{T} \\
                  \end{array}
                \right),
\end{equation}
where the matrix $\mathbf{M}$ in \eqref{LinearSystem} is expanded into a $3\times 3$ matrix. In the following, the goal is to determine the matrix $\mathbf{M}$, namely the $\mathbf{M}_{ij}$ with $i,j = 1, 2, 3$, in \eqref{LinearSystemMatrixForm}.

Let us start with the ocean model
\begin{equation}\label{Ocean_appendix}
\begin{aligned}
    \partial_{t}K_O + c_1\partial_xK_O &= \frac{a}{2}\ (K_A - R_A),\qquad\qquad &&K_O(0,t) = r_WR_O(0,t),\\
    \partial_{t}R_O - \frac{c_1}{3}\partial_xR_O &= -\frac{a}{3}\ (K_A - R_A),\qquad\qquad&&R_O(L_O,t) = r_EK_O(L_O,t),\\
\end{aligned}
\end{equation}
An {upwind scheme} is utilized for the spatial discretization,
\begin{equation}\label{Ocean_upwind_appendix}
\begin{split}
    \partial_t K_{O,i} &= -\frac{c_1}{\Delta x}\left(K_{O,i}-K_{O,{i-1}}\right) + \frac{a}{2}\left(K_{A,i}-R_{A,i}\right),\qquad i=1,\ldots, N_O,\\
    \partial_t R_{O,i} &= \frac{c_1}{3\Delta x}\left(R_{O,{i+1}}-R_{O,i}\right) - \frac{a}{3}\left(K_{A,i}-R_{A,i}\right),\qquad i=1,\ldots, N_O,
\end{split}
\end{equation}
where the boundary conditions are given by $K_{O,0} = r_W R_{O,1}$ and $R_{O,{N_O+1}} = r_E K_{O,{N_O}}$. Therefore,
\begin{tiny}
\begin{equation}\label{matrix_1_4_appendix}{
\begin{gathered}
    \mathbf{M}_{11} = - \frac{c_1}{\Delta x}
    \left(
      \begin{array}{cccccc}
        1   &       &       &       &       &   \\
        -1  & 1     &       &       &       &   \\
            & -1    &  1    &       &       &   \\
            &       & \ddots& \ddots&       &   \\
            &       &       & -1    & 1     &   \\
            &       &       &       &   -1  & 1 \\
      \end{array}
    \right)_{N_O\times N_O},\ 
    \mathbf{M}_{12} = - \frac{c_1}{\Delta x}
    \left(
      \begin{array}{cccccc}
        -r_W    &       &       &       &       &   \\
                &       &       &       &       &   \\
                &       &       &       &       &   \\
                &       &       &       &       &   \\
                &       &       &       &       &   \\
                &       &       &       &       &   \\
      \end{array}
    \right)_{N_O\times N_O}\\
    \mathbf{M}_{22} =  \frac{c_1}{3\Delta x}
    \left(
      \begin{array}{cccccc}
        -1  & 1      &        &         &       &   \\
            & -1     & 1      &         &       &   \\
            &        & \ddots & \ddots  &       &   \\
            &        &        & -1      & 1     &   \\
            &        &        &         & -1    & 1 \\
            &        &        &         &       & -1 \\
      \end{array}
    \right)_{N_O\times N_O},\quad
    \mathbf{M}_{21} =  \frac{c_1}{3\Delta x}
    \left(
      \begin{array}{cccccc}
                &       &       &       &       &   \\
                &       &       &       &       &   \\
                &       &       &       &       &   \\
                &       &       &       &       &   \\
                &       &       &       &       &   \\
                &       &       &       &       & r_E  \\
      \end{array}
    \right)_{N_O\times N_O}
\end{gathered}}
\end{equation}    
\end{tiny}

Next, for the atmosphere model
\begin{equation}\label{Atmosphere_appendix}
\begin{aligned}
    d_AK_A+\partial_xK_A & = m_1 (\mathbf{1}_{[0,L_O]}\alpha_qT-\langle E_q\rangle)/\alpha_q,\qquad\qquad&&K_A(0,t) = K_A(L_A,t),\\
    d_AR_A-\partial_xR_A/3 & = m_2 (\mathbf{1}_{[0,L_O]}\alpha_qT-\langle E_q\rangle)/\alpha_q,\qquad\qquad&& R_A(0,t) = R_A(L_A,t),
\end{aligned}
\end{equation}
where $\mathbf{1}_{[0,L_O]}$ is an indicator function with value being $1$ when variables are located within the interval $[0,L_O]$ and $0$ otherwise.
The discretization of \eqref{Atmosphere_appendix} results in
\begin{equation*}{
\begin{split}
    d_A K_{A,i} + \frac{1}{\Delta x}(K_{A,{i+1}} - K_{A,{i}}) &= m_1\left(T_i - \frac{1}{N_A}\sum_{j=1}^{N_O}T_j\right),\\
    d_A R_{A,i} - \frac{1}{3\Delta x}(R_{A,{i+1}} - R_{A,{i}}) &= m_2\left(T_i - \frac{1}{N_A}\sum_{j=1}^{N_O}T_j\right),\qquad i = 1,\ldots, N_O,
\end{split}}
\end{equation*}
and
\begin{equation}\label{Atmosphere_discrete_appendix}{
\begin{split}
    d_A K_{A,i} + \frac{1}{\Delta x}(K_{A,{i+1}} - K_{A,{i}}) &= m_1\left(- \frac{1}{N_A}\sum_{j=1}^{N_O}T_j\right),\\
    d_A R_{A,i} - \frac{1}{3\Delta x}(R_{A,{i+1}} - R_{A,{i}}) &= m_2\left(- \frac{1}{N_A}\sum_{j=1}^{N_O}T_j\right),\qquad i = N_O+1,\ldots, N_A,
\end{split}}
\end{equation}
where the relationship $E_q=\alpha_q T$ has been used. Note that the averaging is taken over the entire equator band $[0, N_A]$ and therefore the factor $1/N_A$ in front of the summation appears.
Rearranging the terms in \eqref{Atmosphere_discrete_appendix} gives
\begin{equation*}{
\begin{split}
    K_{A,{i+1}} + (d_A\Delta x - 1) K_{A,{i}}  &= \frac{N_A-1}{N_A}\Delta x m_1\left(T_i - \frac{1}{N_A-1}\sum_{ 1\leq j\leq N_O}^{j\neq i}T_j\right),\\
    - R_{A,{i+1}} + (3d_A\Delta x + 1) R_{A,i} &= 3\frac{N_A-1}{N_A}\Delta x m_2\left(T_i - \frac{1}{N_A-1}\sum_{ 1\leq j\leq N_O}^{j\neq i}T_j\right), \\
    &\qquad\qquad\qquad\qquad\qquad\qquad\qquad i = 1,\ldots N_O
\end{split}}
\end{equation*}
and
\begin{equation*}{
\begin{split}
    K_{A,{i+1}} + (d_A\Delta x - 1) K_{A,{i}} &= \frac{N_A-1}{N_A}\Delta x m_1\left( - \frac{1}{N_A-1}\sum_{ 1\leq j\leq N_O}T_j\right),\\
    - R_{A,{i+1}} + (3d_A\Delta x + 1) R_{A,i} &= 3\frac{N_A-1}{N_A} \Delta x m_2\left( - \frac{1}{N_A-1}\sum_{ 1\leq j\leq N_O}T_j\right),\\
    &\qquad\qquad\qquad\qquad\qquad\  i = N_O+1,\ldots, N_A
\end{split}}
\end{equation*}
Thus, $K_{A,i}$ and $R_{A,i}$ for $i = 1,\ldots N_O$ can be solved via the following linear systems,
\begin{equation}\label{M_T_appendix}
\begin{split}
    \mathbf{M_K}\cdot\mathbf{K_A} &= \widetilde{C}_1\mathbf{B_K}\mathbf{T},\\
    \mathbf{M_R}\cdot\mathbf{R_A} &= \widetilde{C}_2\mathbf{B_R}\mathbf{T},
\end{split}
\end{equation}
which means $\mathbf{K_A}$ and $\mathbf{R_A}$ can be expressed by $\mathbf{T}$. In \eqref{M_T_appendix}
\begin{equation*}
    \widetilde{C}_1=\frac{N_A-1}{N_A}\Delta x m_1\qquad\mbox{and}\qquad \widetilde{C}_2=3\frac{N_A-1}{N_A}\Delta x m_2
\end{equation*}
with $\Delta{x}$ being one unit of the spatial discretization.
Further define $\kappa=d_A\Delta x$. Then the matrices in \eqref{M_T_appendix} are given by
\begin{tiny}
\begin{equation*}{
    \mathbf{M_K} = \left(
      \begin{array}{cccccccc}
        -1+\kappa&  1        &       &           &           &           &  \\
                & -1+\kappa  &   1   &           &           &           &  \\
                &           &\ddots &   \ddots  &           &           &   \\
                &           &       & -1+\kappa  &   1       &           &  \\
                &           &       &           &\ddots     &  \ddots   &  \\
                &           &       &           &           &-1+ \kappa   &  1 \\
        1      &           &       &           &           &           & -1+ \kappa \\
      \end{array}
    \right)_{N_A\times N_A},\ 
    \mathbf{K_A} = \left(
      \begin{array}{c}
        K_{A,1} \\
        K_{A,2} \\
        \vdots\\
        K_{A,{N_O}} \\
        \vdots \\
        K_{A,{N_A-1}}\\
        K_{A,{N_A}} \\
      \end{array}
    \right)_{N_A\times 1}},
\end{equation*}
\begin{equation*}{
    \mathbf{M_R} = \left(
      \begin{array}{cccccccc}
        1+3\kappa& -1        &       &           &           &           &  \\
                & 1+3\kappa  &  -1   &           &           &           &  \\
                &           &\ddots &   \ddots  &           &           &   \\
                &           &       & 1+3\kappa  &  -1       &           &  \\
                &           &       &           &\ddots     &  \ddots   &  \\
                &           &       &           &           &1+3\kappa   & -1 \\
        -1      &           &       &           &           &           & 1+3\kappa \\
      \end{array}
    \right)_{N_A\times N_A},\ 
    \mathbf{R_A} = \left(
      \begin{array}{c}
        R_{A,1} \\
        R_{A,2} \\
        \vdots\\
        R_{A,{N_O}} \\
        \vdots \\
        R_{A,{N_A-1}}\\
        R_{A,{N_A}} \\
      \end{array}
    \right)_{N_A\times 1}}.
\end{equation*}       
\end{tiny}
and
\begin{equation*}{
    \mathbf{B}_{\mathbf{K}} = \mathbf{B_R} = \left(
                              \begin{array}{ccccccc}
                                1                   & -\frac{1}{N_A-1}  & \ldots & -\frac{1}{N_A-1}  \\
                                -\frac{1}{N_A-1}    & 1                 & \ldots & -\frac{1}{N_A-1}  \\
                                \vdots              & \vdots            & \ddots & \vdots            \\
                                -\frac{1}{N_A-1}    & -\frac{1}{N_A-1}  & \ldots & 1                 \\
                                -\frac{1}{N_A-1}    & -\frac{1}{N_A-1}  & \ldots & -\frac{1}{N_A-1}  \\
                                \vdots              & \vdots            & \ddots & \vdots            \\
                                -\frac{1}{N_A-1}    & -\frac{1}{N_A-1}  & \ldots & -\frac{1}{N_A-1}  \\
                              \end{array}
                            \right)_{N_A\times N_O},\ 
    \mathbf{T}=\left(
      \begin{array}{c}
        T_1 \\
        \vdots \\
        T_{N_O} \\
      \end{array}
    \right)_{N_O\times 1}}.
\end{equation*}
Recall the ocean model,
\begin{equation*}
\begin{split}
    \partial_t K_{O,i} &= -\frac{c_1}{\Delta x}\left(K_{O,i}-K_{O,{i-1}}\right) { + \frac{a}{2}\left(K_{A,i}-R_{A,i}\right)},\\
    \partial_t R_{O,i} &= \frac{c_1}{3\Delta x}\left(R_{O,{i+1}}-R_{O,i}\right) { - \frac{a}{3}\left(K_{A,i}-R_{A,i}\right)},
\end{split}
\end{equation*}
Since $\mathbf{K_A}$ and $\mathbf{R_A}$ can be expressed by $\mathbf{T}$, we have
\begin{equation}\label{matrix_13_23_appendix}
\begin{split}
    \mathbf{M}_{13} &= \frac{a}{2}\left.\left[\widetilde{C}_1\mathbf{M_K}^{-1}\mathbf{B_K} - \widetilde{C}_2\mathbf{M_R}^{-1}\mathbf{B_R}\right]\right|_{\mbox{Row~}\{1:N_O\}},\\
    \mathbf{M}_{23} &= -\frac{a}{3}\left.\left[\widetilde{C}_1\mathbf{M_K}^{-1}\mathbf{B_K} - \widetilde{C}_2\mathbf{M_R}^{-1}\mathbf{B_R}\right]\right|_{\mbox{Row~}\{1:N_O\}},
\end{split}
\end{equation}
where the original matrices on the right hand side should be of size $N_A\times N_O$ but only the first $N_O$ rows are utilized to form $\mathbf{M}_{13}$ and $\mathbf{M}_{23}$ which correspond to $K_{A,1},\ldots,K_{A,N_O}$ and $R_{A,1},\ldots,R_{A,N_O}$ within the Pacific band.

Finally, let's consider the SST model,
\begin{equation*}
\partial_{t}T =-bT + c_1\eta (K_O + R_O),
\end{equation*}
the discrete form of which is straightforward,
\begin{equation*}
    \partial_{t}T_i=-bT_i + c_1\eta (K_{O,i} + R_{O,i}).
\end{equation*}
The matrices $\mathbf{M}_{31}$, $\mathbf{M}_{32}$ and $\mathbf{M}_{33}$ are all diagonal and their $(i,i)$-th diagonal entries are given by
\begin{equation}\label{matrix_31_32_appendix}
    (\mathbf{M}_{31})_{ii} = (\mathbf{M}_{32})_{ii} = c_1\eta(x_i),\qquad (\mathbf{M}_{33})_{ii} = -b.
\end{equation}
Collecting the results in \eqref{matrix_1_4_appendix}, \eqref{matrix_13_23_appendix} and \eqref{matrix_31_32_appendix} yields the matrix $\mathbf{M}$.

\subsection{Proofs of the propositions}
\subsubsection{Proof of Proposition \ref{Prop:TwoDimensionalModel}}
\begin{proof}
Let $\widetilde{a} = a + \bar{a}$ and $\widetilde{b} = b + \bar{b}$, where $a = vx_a$, $\bar{a} = \bar{v}\bar{x}_a$, $b=vx_b$, and $\bar{b}=\bar{v}\bar{x}_b$. It is clear that $a$ and $b$ are complex scalars. For convenience, write
\begin{equation}\label{a_b_real_imag}
  a = a_{Re} + i a_{Im}, \qquad \mbox{and}\qquad b = b_{Re} + i b_{Im},
\end{equation}
and therefore
\begin{equation*}
\begin{split}
  \bar{a}  = a_{Re} - i a_{Im},&\qquad\mbox{and}\qquad \bar{b} =   b_{Re} - i b_{Im}\\
  \widetilde{a} = a+\bar{a}=2a_{Re},&\qquad\mbox{and}\qquad \widetilde{b} = b+\bar{b}=2b_{Re}.
\end{split}
\end{equation*}
Once $a_{Re}$ is computed, multiplying it by a factor of $2$ finishes the solutions $\widetilde{a}$ and $\widetilde{b}$. For this reason, the focus here is on solving $a_{Re}$.

Multiplying \eqref{LinearSystem_Stochastic6} by ${x}_a$ and ${x}_b$, respectively, yields
\begin{subequations}\label{evolution_a_b}
\begin{align}
  \frac{\d a}{\d t} &= (-d_o + i\omega_o) a + g_a,\label{evolution_a_b1}\\
  \frac{\d b}{\d t} &= (-d_o + i\omega_o) b + g_b,\label{evolution_a_b2}
\end{align}
\end{subequations}
where $g_a = gx_a = g_{a,Re} + ig_{a,Im}$ and $g_b = gx_b= g_{b,Re} + ig_{b,Im}$. Inserting \eqref{a_b_real_imag} into \eqref{evolution_a_b} results in four equations for the real and imaginary parts of $a$ and the real and imaginary parts of $b$,
\begin{subequations}
\begin{align}
  \frac{\d a_{Re}}{\d t} &= -d_o a_{Re} - \omega_o a_{Im} + g_{a,Re},\label{Evolution_a_real}\\
  \frac{\d a_{Im}}{\d t} &= -d_o a_{Im} + \omega_o a_{Re} + g_{a,Im},\label{Evolution_a_imag}\\
  \frac{\d b_{Re}}{\d t} &= -d_o b_{Re} - \omega_o b_{Im} + g_{b,Re},\label{Evolution_b_real}\\
  \frac{\d b_{Im}}{\d t} &= -d_o b_{Im} + \omega_o b_{Re} + g_{b,Im}.\label{Evolution_b_imag}
\end{align}
\end{subequations}
Here, only the two equations \eqref{Evolution_a_real} and \eqref{Evolution_b_real} for the real part of $a$ and $b$ need to be solved, since $2a_{Re}$ and $2b_{Re}$ give the pair of physical variables we aim at.

The goal now is to use $a_{Re}$ and $b_{Re}$ to represent the term $a_{Im}$ and $b_{Im}$ in \eqref{Evolution_a_real} and \eqref{Evolution_b_real}. To this end, let's introduce two new complex scalars
\begin{equation}\label{Define_c_d}
  c = c_{Re} + c_{Im} := \frac{x_a}{x_b}, \qquad\mbox{and}\qquad d = d_{Re} + d_{Im} := \frac{x_b}{x_a},
\end{equation}
which are the ratio (and the inverse ratio) of the two bases $x_a$ and $x_b$. Combining \eqref{a_b_real_imag} and \eqref{Define_c_d} yields,
\begin{equation*}
\begin{split}
  a_{Re} + ia_{Im} &= (b_{Re} + ib_{Im})(c_{Re} + ic_{Im})=(b_{Re}c_{Re}-b_{Im}c_{Im})+i(b_{Re}c_{Im}+b_{Im}c_{Re}),\\
  b_{Re} + ib_{Im} &= (a_{Re} + ia_{Im})(d_{Re} + id_{Im})=(a_{Re}d_{Re}-a_{Im}d_{Im})+i(a_{Re}d_{Im}+a_{Im}d_{Re}),
\end{split}
\end{equation*}
which leads to
\begin{subequations}\label{a_b_Im}
\begin{align}
   a_{Im} &=  c_{Im}b_{Re} + c_{Re}b_{Im},\label{a_Im}\\
   b_{Im} &=  d_{Im}a_{Re} + d_{Re}a_{Im}.\label{b_Im}
\end{align}
\end{subequations}
With \eqref{a_b_Im} in hand, plugging \eqref{a_Im} into \eqref{Evolution_a_real} yields
\begin{equation}\label{Evolution_a_real_2}
  \frac{\d a_{Re}}{\d t} = -d_o a_{Re} - \omega_o c_{Im} b_{Re} - \omega_o c_{Re}b_{Im} + g_{a,Re}.
\end{equation}
Next, plugging \eqref{b_Im} into \eqref{Evolution_a_real_2} to cancel $b_{Im}$ leads to
\begin{equation}\label{Evolution_a_real_3}
  \frac{\d a_{Re}}{\d t} = -d_o a_{Re} - \omega_o c_{Im}b_{Re} - \omega_o c_{Re}d_{Im}a_{Re} - \omega_o c_{Re} d_{Re}a_{Im} + g_{a,Re}.
\end{equation}
Although \eqref{Evolution_a_real_3} again introduces $a_{Im}$, making use of \eqref{Evolution_a_real} and \eqref{Evolution_a_real_3} suffices to cancel this $a_{Im}$. To this end, multiplying the equation \eqref{Evolution_a_real} by $c_{Re} d_{Re}$ gives
\begin{equation}\label{Evolution_a_real_4}
  c_{Re} d_{Re}\frac{\d a_{Re}}{\d t}  = -d_o c_{Re} d_{Re}a_{Re} - \omega_o c_{Re} d_{Re}a_{Im} + c_{Re} d_{Re}g_{a,Re}.
\end{equation}
Subtracting \eqref{Evolution_a_real_4} from \eqref{Evolution_a_real_3} results in
\begin{equation*}
  (1-c_{Re} d_{Re})\frac{\d a_{Re}}{\d t}  = -d_o (1-c_{Re} d_{Re}) a_{Re} - \omega_o c_{Im}b_{Re} - \omega_o c_{Re}d_{Im}a_{Re} + (1-c_{Re} d_{Re})g_{a,Re},
\end{equation*}
which implies
\begin{equation}\label{Evolution_a_real_5}
  \frac{\d a_{Re}}{\d t}  = -d_o  a_{Re} - \frac{\omega_o c_{Im}}{1-c_{Re} d_{Re}}b_{Re} - \frac{\omega_o c_{Re}d_{Im}}{1-c_{Re} d_{Re}}a_{Re} + g_{a,Re}.
\end{equation}

Applying the same argument leads to the corresponding equation of $b_{Re}$ that does not explicitly depend on $b_{Im}$,
\begin{equation}\label{Evolution_b_real_5}
  \frac{\d b_{Re}}{\d t}  = -d_o  b_{Re} - \frac{\omega_o d_{Im}}{1-d_{Re} c_{Re}}a_{Re} - \frac{\omega_o d_{Re}c_{Im}}{1-d_{Re} c_{Re}}b_{Re} + g_{b,Re}.
\end{equation}

Finally, multiplying both $a_{Re}$ and $b_{Re}$ by a factor of $2$ gives the two-dimensional oscillator model of $\tilde{a}$ and $\tilde{b}$,
\begin{equation}\label{Discharge_Recharge_General}
\begin{split}
  \frac{\d\widetilde{a}}{\d t}  &= -d_o  \widetilde{a} - \frac{\omega_o c_{Im}}{1-c_{Re} d_{Re}}\widetilde{b} - \frac{\omega_o c_{Re}d_{Im}}{1-c_{Re} d_{Re}}\widetilde{a} + 2g_{a,Re},\\
  \frac{\d\widetilde{b}}{\d t}  &= -d_o  \widetilde{b} - \frac{\omega_o d_{Im}}{1-d_{Re} c_{Re}}\widetilde{a} - \frac{\omega_o d_{Re}c_{Im}}{1-d_{Re} c_{Re}}\widetilde{b} + 2g_{b,Re},
\end{split}
\end{equation}
where
\begin{equation}\label{g_discharge_recharge}
  g_{a,Re} = Re(gx_a) = Re(\mathbf{z}^T\mathbf{s}_\mathbf{u}x_a)a_p\qquad\mbox{and}\qquad g_{b,Re} = Re(gx_b)=Re(\mathbf{z}^T\mathbf{s}_\mathbf{u}x_b)a_p.
\end{equation}
It is convenient to define
\begin{equation}
\begin{split}
  \widetilde{C} &=\left(
    \begin{array}{cc}
      \widetilde{c}_{11} & \widetilde{c}_{12} \\
      \widetilde{c}_{21} & \widetilde{c}_{22} \\
    \end{array}
  \right)=
  -\frac{\omega_o}{1-c_{Re} d_{Re}}\left(
     \begin{array}{cc}
       c_{Re}d_{Im} & c_{Im} \\
       d_{Im} & d_{Re}c_{Im} \\
     \end{array}
   \right) \\
   &=-\frac{\omega_o}{1-\cos^2(\alpha-\beta)}\left(
     \begin{array}{cc}
       -\sin(\alpha-\beta)\cos(\alpha-\beta) & \frac{A_a}{A_b}\sin(\alpha-\beta) \\
       -\frac{A_b}{A_a}\sin(\alpha-\beta) & \sin(\alpha-\beta)\cos(\alpha-\beta) \\
     \end{array}
   \right),
\end{split}
\end{equation}
which links $(c, d)$ with $(\alpha, \beta)$ and gives the form in \eqref{Two_dimensional_model}.
\end{proof}
\subsubsection{Proof of Proposition \ref{Prop:delayed_oscillator}}
\begin{proof}
The delayed oscillator is formed based on the same starting idea from the discharge-recharge oscillator. It exploits the equations in \eqref{Discharge_Recharge_General} to represent $\tilde{b}(\tau)$ in the first equation as a function of $\tilde{a}(\tau-\delta)$ leads to a delayed El Ni\~no.

Let us start with \eqref{Discharge_Recharge_General}. To make the notation simple, rewrite \eqref{Discharge_Recharge_General} as follows:
\begin{equation}\label{Discharge_Recharge_General2}
\begin{split}
  \frac{\d\widetilde{a}}{\d t}  &= -d_o  \widetilde{a} + \widetilde{c}_{11}\widetilde{a} +\widetilde{c}_{12}\widetilde{b} + \widetilde{g}_{a},\\
  \frac{\d\widetilde{b}}{\d t}  &= -d_o  \widetilde{b} + \widetilde{c}_{21}\widetilde{a} +\widetilde{c}_{22}\widetilde{b} + \widetilde{g}_{b},
\end{split}
\end{equation}
where $\widetilde{g}_{a}=2g_{a,Re}$ and $\widetilde{g}_{b}=2g_{b,Re}$. Define two-dimensional vectors $\widetilde{u}, \widetilde{g}$ and a $2\times2$ matrix $ {\mathbf{C}}$,
\begin{equation*}
   {\mathbf{u}} = \left(
                    \begin{array}{c}
                      \widetilde{a} \\
                      \widetilde{b} \\
                    \end{array}
                  \right),\qquad
   {\mathbf{g}} = \left(
                    \begin{array}{c}
                      \widetilde{g}_a \\
                      \widetilde{g}_b \\
                    \end{array}
                  \right),\qquad
   {\mathbf{C}} = -d_o\mathbf{I}+\left(
                    \begin{array}{cc}
                      \widetilde{c}_{11} & \widetilde{c}_{12}\\
                      \widetilde{c}_{21} & \widetilde{c}_{22}\\
                    \end{array}
                  \right)=-d_o\mathbf{I}+\widetilde{C}.
\end{equation*}
It is easy to check that $ {\mathbf{C}}$ has two distinct eigenvectors $-d_o \pm i\omega_o$. Thus, a similarity transform leads to
${\mathbf{C}} = \mathbf{Y} {\mathbf{\Lambda}}\mathbf{Y}^{-1}$, where $ {\mathbf{\Lambda}}$ is a diagonal matrix with diagonal entries being $ -d_o\pm i\omega_o$. Now, \eqref{Discharge_Recharge_General2} can be written as
\begin{equation}\label{Delayed_Eqn1}
  \frac{\d{\mathbf{u}}}{\d t} = {\mathbf{C}}{\mathbf{u}} + {\mathbf{g}}
  = \mathbf{Y}{\mathbf{\Lambda}}\mathbf{Y}^{-1}{\mathbf{u}} + {\mathbf{g}}.
\end{equation}
Define ${\mathbf{w}}=\mathbf{Y}^{-1}{\mathbf{u}}$ and ${\mathbf{f}} = \mathbf{Y}^{-1}{\mathbf{g}}$. Multiplying \eqref{Delayed_Eqn1} by $\mathbf{Y}^{-1}$ yields
\begin{equation*}
  \frac{\d{\mathbf{w}}}{\d t} = {\mathbf{\Lambda}}{\mathbf{w}} + {\mathbf{f}},
\end{equation*}
where
\begin{equation*}
  {\mathbf{w}} = \left(
                    \begin{array}{c}
                      w \\
                      \bar{w} \\
                    \end{array}
                  \right),\qquad
  {\mathbf{f}} = \left(
                    \begin{array}{c}
                      f \\
                      \bar{f} \\
                    \end{array}
                  \right),\qquad
  {\mathbf{\Lambda}} = \left(
                    \begin{array}{cc}
                      -d_o+i\omega_o&0 \\
                      0&-d_o-i\omega_o \\
                    \end{array}
                  \right),
\end{equation*}
and therefore
\begin{equation}\label{Equation_v}
  \frac{\d w}{\d t} = (-d_o+i\omega_o)w + f.
\end{equation}
Once $\mathbf{w}$ is solved, $\mathbf{u}$ is easily recovered by $\mathbf{u} = \mathbf{Y}\mathbf{w}$.
Since the two eigenvalues are a complex conjugate pair, the eigenvector matrix $\mathbf{Y}$ must have the following form
\begin{equation}\label{Matrix_Y}
  \mathbf{Y} = \left(
                 \begin{array}{cc}
                   y_1 & \bar{y}_1 \\
                   y_2 & \bar{y}_2 \\
                 \end{array}
               \right).
\end{equation}
Therefore,
\begin{equation}\label{Define_ab}
  \widetilde{a} := u_1 = 2Re(y_1w),\qquad\mbox{and}\qquad \widetilde{b} := u_2 = 2Re(y_2w).
\end{equation}
For the notation simplicity below, rewrite $y_1$ and $y_2$ into the following form
\begin{equation*}
  y_1 = y_{1,Re} + iy_{1,Im},\qquad\mbox{and}\qquad y_2 = y_{2,Re} + iy_{2,Im}.
\end{equation*}
Similarly, rewrite the initial value $w(0)$ and the forcing $f(s)$ as
\begin{equation*}
  w(0) = w_{Re}(0) + iw_{Im}(0),\qquad\mbox{and}\qquad f(s) = f_{Re}(0) + i f_{Im}(0).
\end{equation*}
Thus, the solution of \eqref{Equation_v} can be written as
\begin{equation*}
\begin{split}
  w(t) &= w(0) e^{-d_ot } \Big(\cos(\omega_o t) + i\sin(\omega_ot)\Big) \\
  &\qquad\qquad + \int_0^t e^{-d_o(t-s)}\Big(\cos\big(\omega_o(t-s)\big)+i\sin\big(\omega_o(t-s)\big)\Big)f(s)ds\\
  &=e^{-d_ot} \Big[\Big(w_{Re}(0) \cos(\omega_ot) - w_{Im}(0) \sin(\omega_ot)\Big) \\
  &\qquad\qquad\qquad\qquad\qquad+i \Big(w_{Re}(0) \sin(\omega_ot) + w_{Im}(0) \cos(\omega_ot)\Big) \Big]\\
  &\qquad +\int_0^t e^{-d_o(t-s)}\Big[\Big(\cos\big(\omega_o(t-s)\big)f_{Re}(s)-\sin\big(\omega_o(t-s)\big)f_{Im}\Big) \\
   &\qquad\qquad+i \Big(\cos\big(\omega_o(t-s)\big)f_{Im}(s)+\sin\big(\omega_o(t-s)\big)f_{Re}\Big)\Big]ds.
\end{split}
\end{equation*}
Therefore,
\begin{equation}\label{u1_form}
\begin{split}
  u_1(t) &= 2Re(y_1w)\\
   &= 2e^{-d_ot}\Big[y_{1,Re}\Big(w_{Re}(0)\cos(\omega_ot)-w_{Im}(0)\sin(\omega_ot)\Big) \\
   & \qquad\qquad\qquad\qquad\qquad\qquad- y_{1,Im}\Big(w_{Re}(0)\sin(\omega_ot)+w_{Im}(0)\cos(\omega_ot)\Big)\Big]\\
   &\qquad+2\int_0^t e^{-d_o(t-s)}\Big[y_{1,Re}\Big(\cos\big(\omega_o(t-s)\big)f_{Re}(s)-\sin\big(\omega_o(t-s)\big)f_{Im}(s)\Big)\\
   &\qquad\qquad-y_{1,Im}\Big(\cos\big(\omega_o(t-s)\big)f_{Im}(s)+\sin\big(\omega_o(t-s)\big)f_{Re}(s)\Big)
   \Big]ds.
\end{split}
\end{equation}
Similarly,
\begin{equation}\label{u2_form}
\begin{split}
  u_2(t) &= 2Re(y_2w)\\
   &= 2e^{-d_ot}\Big[y_{2,Re}\Big(w_{Re}(0)\cos(\omega_ot)-w_{Im}(0)\sin(\omega_ot)\Big) \\
   & \qquad\qquad\qquad\qquad\qquad\qquad- y_{2,Im}\Big(w_{Re}(0)\sin(\omega_ot)+w_{Im}(0)\cos(\omega_ot)\Big)\Big]\\
   &\qquad+2\int_0^t e^{-d_o(t-s)}\Big[y_{2,Re}\Big(\cos\big(\omega_o(t-s)\big)f_{Re}(s)-\sin\big(\omega_o(t-s)\big)f_{Im}(s)\Big)\\
   &\qquad\qquad-y_{2,Im}\Big(\cos\big(\omega_o(t-s)\big)f_{Im}(s)+\sin\big(\omega_o(t-s)\big)f_{Re}(s)\Big)
   \Big]ds.
\end{split}
\end{equation}
Note that \eqref{u1_form} and \eqref{u2_form} have the same composition: a contribution from initial value and a contribution from external forcing. Below, let's write down these two components (and drop the common prefactors) for the convenience of discussion.

\noindent I. Composition of initial value:
\begin{equation*}
\begin{split}
  u_1^I = & \Big(y_{1,Re}w_{Re}(0) - y_{1,Im}w_{Im}(0)\Big)\cos(\omega_ot) - \Big(y_{1,Re}w_{Im}(0) - y_{1,Im}w_{Re}(0)\Big)\sin(\omega_ot)\\
  u_2^I = & \Big(y_{2,Re}w_{Re}(0) - y_{2,Im}w_{Im}(0)\Big)\cos(\omega_ot) - \Big(y_{2,Re}w_{Im}(0) - y_{2,Im}w_{Re}(0)\Big)\sin(\omega_ot)
\end{split}
\end{equation*}
II. Composition of external forcing:
\begin{equation*}
\begin{split}
  u_1^{II}=& \Big(y_{1,Re}f_{Re}(s) - y_{1,Im}f_{Im}(s)\Big)\cos(\omega_o(t-s)) - \Big(y_{1,Re}f_{Im}(s) - y_{1,Im}f_{Re}(s)\Big)\sin(\omega_o(t-s))\\
  u_2^{II}=& \Big(y_{2,Re}f_{Re}(s) - y_{2,Im}f_{Im}(s)\Big)\cos(\omega_o(t-s)) - \Big(y_{2,Re}f_{Im}(s) - y_{2,Im}f_{Re}(s)\Big)\sin(\omega_o(t-s))
\end{split}
\end{equation*}
For the initial value part, rewrite the two expressions as
\begin{subequations}
\begin{align}
  u_1^I&=\left(
           \begin{array}{cc}
             y_{1,Re}w_{Re}(0) - y_{1,Im}w_{Im}(0) & -y_{1,Re}w_{Im}(0) - y_{1,Im}w_{Re}(0) \\
           \end{array}
         \right)\left(
         \begin{array}{c}
         \cos(\omega_ot) \\
         \sin(\omega_ot) \\
         \end{array}
         \right)\notag\\
  &=\left(
           \begin{array}{cc}
             w_{Re}(0) & w_{Im}(0) \\
           \end{array}
         \right)
  \left(
    \begin{array}{cc}
      y_{1,Re} & -y_{1,Im} \\
      -y_{1,Im} & -y_{1,Re} \\
    \end{array}
  \right)\left(
         \begin{array}{c}
         \cos(\omega_ot) \\
         \sin(\omega_ot) \\
         \end{array}
         \right)\label{eqn_u_1_I}\\
  u_2^I&=\left(
           \begin{array}{cc}
             y_{2,Re}w_{Re}(0) - y_{2,Im}w_{Im}(0) & -y_{2,Re}w_{Im}(0) - y_{2,Im}w_{Re}(0) \\
           \end{array}
         \right)\left(
         \begin{array}{c}
         \cos(\omega_ot) \\
         \sin(\omega_ot) \\
         \end{array}
         \right)\notag\\
  &=\left(
           \begin{array}{cc}
             w_{Re}(0) & w_{Im}(0) \\
           \end{array}
         \right)
  \left(
    \begin{array}{cc}
      y_{2,Re} & -y_{2,Im} \\
      -y_{2,Im} & -y_{2,Re} \\
    \end{array}
  \right)\left(
         \begin{array}{c}
         \cos(\omega_ot) \\
         \sin(\omega_ot) \\
         \end{array}
         \right)\label{eqn_u_2_I}
\end{align}
\end{subequations}
and
\begin{subequations}
\begin{align}
  u_1^{II}&=\left(
           \begin{array}{cc}
             y_{1,Re}f_{Re}(s) - y_{1,Im}f_{Im}(s) & -y_{1,Re}f_{Im}(s) - y_{1,Im}f_{Re}(s) \\
           \end{array}
         \right)\left(
         \begin{array}{c}
         \cos(\omega_o(t-s)) \\
         \sin(\omega_o(t-s)) \\
         \end{array}
         \right)\notag\\
  &=\left(
           \begin{array}{cc}
             f_{Re}(s) & f_{Im}(s) \\
           \end{array}
         \right)
  \left(
    \begin{array}{cc}
      y_{1,Re} & -y_{1,Im} \\
      -y_{1,Im} & -y_{1,Re} \\
    \end{array}
  \right)\left(
         \begin{array}{c}
         \cos(\omega_o(t-s)) \\
         \sin(\omega_o(t-s)) \\
         \end{array}
         \right)\label{eqn_u_1_II}\\
  u_2^{II}&=\left(
           \begin{array}{cc}
             y_{2,Re}f_{Re}(s) - y_{2,Im}f_{Im}(s) & -y_{2,Re}f_{Im}(s) - y_{2,Im}f_{Re}(s) \\
           \end{array}
         \right)\left(
         \begin{array}{c}
         \cos(\omega_o(t-s)) \\
         \sin(\omega_o(t-s)) \\
         \end{array}
         \right)\notag\\
  &=\left(
           \begin{array}{cc}
             f_{Re}(s) & f_{Im}(s) \\
           \end{array}
         \right)
  \left(
    \begin{array}{cc}
      y_{2,Re} & -y_{2,Im} \\
      -y_{2,Im} & -y_{2,Re} \\
    \end{array}
  \right)\left(
         \begin{array}{c}
         \cos(\omega_o(t-s)) \\
         \sin(\omega_o(t-s)) \\
         \end{array}
         \right)\label{eqn_u_2_II}
\end{align}
\end{subequations}
It is clear that the difference between $u_1$ and $u_2$ is given by the angle $\varphi$ between the two matrices consisted of the eigenvectors $y_1, y_2$
\begin{equation}\label{M12_Matrix}
  M_{y_1} = \left(
    \begin{array}{cc}
      y_{1,Re} & -y_{1,Im} \\
      -y_{1,Im} & -y_{1,Re} \\
    \end{array}
  \right)\qquad\mbox{and}\qquad
  M_{y_2} = \left(
    \begin{array}{cc}
      y_{2,Re} & -y_{2,Im} \\
      -y_{2,Im} & -y_{2,Re} \\
    \end{array}
  \right),
\end{equation}
that is, there exists a rotation matrix and a constant
\begin{equation}\label{Psi_cy}
  \varPhi = \left(
    \begin{array}{cc}
      \cos\varphi & \sin\varphi \\
      -\sin\varphi & \cos\varphi \\
    \end{array}
  \right),\qquad c_y = \frac{\|y_2\|}{\|y_1\|}
\end{equation}
such that
\begin{equation}\label{Psi_eqn}
  M_{y_2} = c_yM_{y_1}\varPhi.
\end{equation}
Making use of \eqref{Psi_eqn}, the formula in \eqref{eqn_u_2_I} becomes
\begin{equation}\label{rewrite_u_2_I}
\begin{split}
  u_2^I &=\left(
           \begin{array}{cc}
             w_{Re}(0) & w_{Im}(0) \\
           \end{array}
         \right)
  \left(
    \begin{array}{cc}
      y_{2,Re} & -y_{2,Im} \\
      -y_{2,Im} & -y_{2,Re} \\
    \end{array}
  \right)\left(
         \begin{array}{c}
         \cos(\omega_ot) \\
         \sin(\omega_ot) \\
         \end{array}
         \right)\\
  &=c_y\left(
           \begin{array}{cc}
             w_{Re}(0) & w_{Im}(0) \\
           \end{array}
         \right)
  \left(
    \begin{array}{cc}
      y_{1,Re} & -y_{1,Im} \\
      -y_{1,Im} & -y_{1,Re} \\
    \end{array}
  \right)\left(
    \begin{array}{cc}
      \cos\varphi & \sin\varphi \\
      -\sin\varphi & \cos\varphi \\
    \end{array}
  \right)\left(
         \begin{array}{c}
         \cos(\omega_ot) \\
         \sin(\omega_ot) \\
         \end{array}
         \right)\\
  &=c_y\left(
           \begin{array}{cc}
             w_{Re}(0) & w_{Im}(0) \\
           \end{array}
         \right)
  \left(
    \begin{array}{cc}
      y_{1,Re} & -y_{1,Im} \\
      -y_{1,Im} & -y_{1,Re} \\
    \end{array}
  \right)\left(
         \begin{array}{c}
         \cos\varphi\cos(\omega_ot) +\sin\varphi\sin(\omega_ot) \\
         -\sin\varphi\cos(\omega_ot) +\cos\varphi\sin(\omega_ot) \\
         \end{array}
         \right)\\
  &=c_y\left(
           \begin{array}{cc}
             w_{Re}(0) & w_{Im}(0) \\
           \end{array}
         \right)
  \left(
    \begin{array}{cc}
      y_{1,Re} & -y_{1,Im} \\
      -y_{1,Im} & -y_{1,Re} \\
    \end{array}
  \right)\left(
         \begin{array}{c}
         \cos(\omega_ot-\varphi) \\
         \sin(\omega_ot-\varphi)  \\
         \end{array}
         \right)
\end{split}
\end{equation}
Similarly, \eqref{eqn_u_2_II} can be rewritten as
\begin{equation}\label{rewrite_u_2_II}
\begin{split}
  u_2^{II}&=\left(
           \begin{array}{cc}
             f_{Re}(s) & f_{Im}(s) \\
           \end{array}
         \right)
  \left(
    \begin{array}{cc}
      y_{2,Re} & -y_{2,Im} \\
      -y_{2,Im} & -y_{2,Re} \\
    \end{array}
  \right)\left(
         \begin{array}{c}
         \cos(\omega_o(t-s)) \\
         \sin(\omega_o(t-s)) \\
         \end{array}
         \right)\\
  &=c_y\left(
           \begin{array}{cc}
             f_{Re}(s) & f_{Im}(s) \\
           \end{array}
         \right)\left(
    \begin{array}{cc}
      y_{1,Re} & -y_{1,Im} \\
      -y_{1,Im} & -y_{1,Re} \\
    \end{array}
  \right)\left(
         \begin{array}{c}
         \cos(\omega_o(t-s)-\varphi) \\
         \sin(\omega_o(t-s)-\varphi)  \\
         \end{array}
         \right)
\end{split}
\end{equation}
Therefore, in light of \eqref{rewrite_u_2_I} and \eqref{rewrite_u_2_II}, \eqref{u2_form} becomes
\begin{equation}\label{u2_form_new}
\begin{split}
  u_2(t)
   &= 2e^{-d_ot}\Big[y_{2,Re}\Big(w_{Re}(0)\cos(\omega_ot)-w_{Im}(0)\sin(\omega_ot)\Big)\\
   & \qquad\qquad\qquad\qquad\qquad\qquad- y_{2,Im}\Big(w_{Re}(0)\sin(\omega_ot)+w_{Im}(0)\cos(\omega_ot)\Big)\Big]\\
   &\qquad+2\int_0^t e^{-d_o(t-s)}\Big[y_{2,Re}\Big(\cos\big(\omega_o(t-s)\big)f_{Re}(s)-\sin\big(\omega_o(t-s)\big)f_{Im}(s)\Big)\\
   &\qquad\qquad-y_{2,Im}\Big(\cos\big(\omega_o(t-s)\big)f_{Im}(s)+\sin\big(\omega_o(t-s)\big)f_{Re}(s)\Big)
   \Big]ds\\
   &= 2c_ye^{-d_ot}\Big[y_{1,Re}\Big(w_{Re}(0)\cos(\omega_ot-\varphi)-w_{Im}(0)\sin(\omega_ot-\varphi)\Big)\\
   &\qquad\qquad\qquad- y_{1,Im}\Big(w_{Re}(0)\sin(\omega_ot)+w_{Im}(0)\cos(\omega_ot)\Big)\Big]\\
   &\quad +2c_y\int_0^t e^{-d_o(t-s)}\Big[y_{1,Re}\Big(\cos\big(\omega_o(t-s)-\varphi\big)f_{Re}(s)-\sin\big(\omega_o(t-s)-\varphi\big)f_{Im}(s)\Big)\\
   &\qquad\qquad-y_{1,Im}\Big(\cos\big(\omega_o(t-s)-\varphi\big)f_{Im}(s)+\sin\big(\omega_o(t-s)-\varphi\big)f_{Re}(s)\Big)
   \Big]ds.
\end{split}
\end{equation}
Comparing \eqref{u1_form} and \eqref{u2_form_new} leads to
\begin{equation}\label{relation_u2_u1}
\begin{split}
  u_2(t) &= c_ye^{-d_o\varphi/\omega_o}u_1(t-\varphi/\omega_o) \\
  &\quad+ 2c_y\int_{t-\varphi/\omega_o}^t e^{-d_o(t-s)}\bigg[y_{1,Re}\Big(\cos\big(\omega_o(t-s)-\varphi\big)f_{Re}(s)\\
   &\qquad\qquad\qquad\qquad\qquad\qquad\qquad\qquad\qquad-\sin\big(\omega_o(t-s)-\varphi\big)f_{Im}(s)\Big)\\
   &\qquad-y_{1,Im}\Big(\cos\big(\omega_o(t-s)-\varphi\big)f_{Im}(s)+\sin\big(\omega_o(t-s)-\varphi\big)f_{Re}(s)\Big)\bigg]ds
\end{split}
\end{equation}
Note that in \eqref{Define_ab} we defined $\widetilde{a}:=u_1$ and $\widetilde{b}:=u_2$. Thus, plugging \eqref{relation_u2_u1} to the first equation of \eqref{Discharge_Recharge_General2} yields
\begin{equation}\label{Delayed_detail}
\begin{split}
  \frac{\d\widetilde{a}(t)}{\d t}  &= -d_o  \widetilde{a}(t) + \widetilde{c}_{11}\widetilde{a}(t) +\widetilde{c}_{12}\widetilde{b}(t) + \widetilde{g}_{a}(t)\\
    &=-d_o  \widetilde{a}(t) + \widetilde{c}_{11}\widetilde{a}(t) +\widetilde{c}_{12}c_ye^{-d_o\varphi/\omega_o}\widetilde{a}(t-\varphi/\omega_o) + \widetilde{g}_{a}(t)\\
    &\quad + 2c_y\widetilde{c}_{ab}\int_{t-\varphi/\omega_o}^t e^{-d_o(t-s)}\bigg[y_{1,Re}\Big(\cos\big(\omega_o(t-s)-\varphi\big)f_{Re}(s)\\
&\qquad\qquad\qquad\qquad\qquad\qquad\qquad\qquad -\sin\big(\omega_o(t-s)-\varphi\big)f_{Im}(s)\Big)\\
   &\qquad-y_{1,Im}\Big(\cos\big(\omega_o(t-s)-\varphi\big)f_{Im}(s)+\sin\big(\omega_o(t-s)-\varphi\big)f_{Re}(s)\Big)\bigg]\d s
\end{split}
\end{equation}
\end{proof}

\bibliography{references}

\end{document}